\documentclass[reqno]{amsart}%
\usepackage{amsfonts}
\usepackage{verbatim}
\usepackage{xcolor}
\usepackage{amsmath}
\usepackage{amssymb}
\usepackage{graphicx}
\usepackage{accents}%
\providecommand{\U}[1]{\protect\rule{.1in}{.1in}}
\newtheorem{theorem}{Theorem}[section]

\newtheorem{proposition}[theorem]{Proposition}
\newtheorem{remark}[theorem]{Remark}
\newtheorem{lemma}[theorem]{Lemma}

\numberwithin{equation}{section}
\begin{document}
\title[KdV equation]{On the inverse scattering transform for the KdV equation with summable initial data}
\author{Alexei Rybkin}
\address{Department of Mathematics and Statistics, University of Alaska Fairbanks, PO
Box 756660, Fairbanks, AK 99775}
\email{arybkin@alaska.edu}
\thanks{The author is supported in part by the NSF grant DMS-2307774.}
\date{April, 2026}
\subjclass{34L25, 37K15, 47B35}
\keywords{trace formula, KdV equation, Hankel operator.}

\begin{abstract}
We consider the Cauchy problem for the Korteweg--de Vries equation with real
initial data $q\in L^{1}\cap L^{2}$ supported on $(0,\infty)$. Using the left
reflection coefficient and Hankel operators on the Hardy space $H^{2}%
(\mathbb{C}^{+})$, we derive a trace-type representation for the corresponding
solution. The proof is based on approximation by compactly supported
potentials, uniform convergence of the associated reflection coefficients away
from the origin, and continuity properties of the resulting Hankel operators.
This yields a rigorous inverse scattering construction for a class of summable
half-line supported initial data beyond the standard short-range setting.

\end{abstract}
\maketitle
\dedicatory{This work is dedicated to the memory of Vladimir A. Marchenko, whose
fundamental contributions to inverse spectral theory and the inverse
scattering transform have profoundly shaped the field. His ideas continue to
influence modern developments in integrable systems.}

\section{Introduction}

We are concerned with the inverse scattering transform (IST)\ for the Cauchy
problem for the Korteweg-de Vries equation%
\begin{equation}%
\begin{cases}
\partial_{t}q-6q\partial_{x}q+\partial_{x}^{3}q=0,\ \ \ x\in\mathbb{R}%
,t\geq0\\
q(x,0)=q(x).
\end{cases}
\label{KdV}%
\end{equation}
This is of course a classical well-known problem solved in the seminal 1967
Gardner-Greene-Kruskal-Miura paper \cite{GGKM67} who solved it for Schwarz
initial data $q$. Conceptually, the IST is similar to the Fourier method but
is based on the direct/inverse scattering\footnote{If $q$ is periodic then in
place of the IST one needs to use the inverse spectral transform which is also
well-developed.} theory for the 1D Schr\"{o}dinger operator $\mathbb{L}%
_{q}=-\partial_{x}^{2}+q(x)$. More specifically, solution to (\ref{KdV})\ is
given in terms of the scattering data $S_{q}$ (see below) via a three step
procedure. The crucial use of the inverse scattering problem imposes a severe
limitation on $q$ - it has to satisfy
\begin{equation}
\int_{\mathbb{R}}\left(  1+x\right)  \left\vert q\left(  x\right)  \right\vert
\mathrm{d}x<\infty\text{ (short-range).}\label{short range}%
\end{equation}
Any relaxation of (\ref{short range}) meets serious complications. In
\cite{ADM81} two potentials $q_{\pm}$, supported on $\left(  \pm
\infty,0\right)  $, and behaving like $q\left(  x\right)  =O\left(  \left\vert
x\right\vert ^{-2}\right)  $, $x\rightarrow\pm\infty$, respectively, are
explicitly constructed , that share the same reflection coefficients. I.e. the
scattering data $S_{q}$ no longer determine $q$ uniquely and scattering data
are no longer data. This of course means that the IST method, as we know it,
break down. The uniqueness in \cite{ADM81} can be easily restore by specifying
the support of $q$ \ Moreover the situation remains manageable if $q$ is
assumed to be short-range only at $+\infty$. Then there exists a suitable set
of scattering data from the right that allows to push the IST method to
essentially any $q$ such that the spectrum of $\mathbb{L}_{q}$ is bounded
below. Such $q$ is commonly referred as to step-like or step-type (see
\cite{GruRybBLMS20} and the literature cited therein). The decay of type
(\ref{short range}) is still assumed at $+\infty$ and this assumption is
crucial for the technics of \cite{GruRybBLMS20} to work.

The main goal of our contribution is to address this problem when
(\ref{short range}) is violated at $+\infty$. Of course, due to unidirectional
nature of the KdV equation, the simple trick of considering $q\left(
-x,-t\right)  $ (which clearly solves KdV) does not work. Indeed, while
$+\infty$ and $-\infty$ do switch, but the technics of \cite{GruRybBLMS20}
dramatically fail for $t<0$. The main reason is that while scattering matrix
is well-defined for any $L^{1}$ potential but continuity is lost at the zero
energy. In other words, zero energy (momentum) is a spectral singularity which
is notoriously hard to control. By the same token, we don't know if it is
possible at all to associate with a zero spectral singularity a norming
constant that would restore uniqueness. In this paper, we develop a systematic
framework in which detour this circumstance by imposing a condition that $q\in
L^{1}$ are supported on $\left(  0,\infty\right)  $. This does not remove a
spectral singularity at $0$ but use the techniques of Hankel operators
developed in \cite{GruRybSIMA15}, \cite{GruRybBLMS20}, and \cite{RybNON24} are
robust enough to circumvent the issues. We note that in the context of
Wigner-von Neumann initial data certain data behaving like $q\left(  x\right)
=0(1/x)$ as $x\rightarrow+\infty$ is recently considered in
\cite{GruRybNON2022},\cite{RybCMP23}. We believe that putting $L^{1}$
potentials into IST setting is approached on the rigorous bases for the first
time here. We discuss challenges below in detail once we set up our framework.

The structure of the paper is as follows. In Section \ref{notation} we fix
notation. Section \ref{HO} reviews the Hardy-space and Hankel-operator tools
used throughout the paper. In Section \ref{Sect Overview} we briefly recall
the relevant facts from one-dimensional scattering theory and the associated
IST for compactly supported potentials, with particular emphasis on the left
reflection coefficient and on the approximation by truncated potentials.
Section \ref{sect of kdv solutions} contains the proof of the main trace
formula for KdV solutions corresponding to real $L^{1}\cap L^{2}$ initial data
supported on $(0,\infty)$. The appendix collects several auxiliary facts used
in the limiting arguments.

\section{Notations\label{notation}}

Our notations are quite standard:

\begin{itemize}
\item $\mathrm{1}_{S}$ is the characteristic function of a (measurable) set
$S$.

\item $L^{p}\left(  \mathbb{R}\right)  =L^{p}$, $1\leq p\leq\infty$;
$\left\Vert \cdot\right\Vert _{p}$ is the $L^{p}$ norm.

\item We write $x\lesssim_{a}y$ if $x,y\geq0$ and $x\leq C\left(  a\right)  y$
with a positive $C$ dependent on $a$. We drop $a$ if $C$ is a universal constant.

\item As always for a Hilbert space operator $T$ we $T\geq0$ if $\left\langle
Tf,f\right\rangle \geq0$ for all $f$ (a positive operator).
\end{itemize}

\section{Hardy spaces and Hankel operators\label{HO}}

To fix our notation we review some basics of Hardy spaces and Hankel operators
following \cite{Nik2002},\cite{Peller2003}

A function $f$ analytic in $\mathbb{C}^{\pm}=\left\{  z\in\mathbb{C}%
:\pm\operatorname{Im}z>0\right\}  $ is in the Hardy space $H_{\pm}^{p}$ for
some $0<p\leq\infty$ if
\[
\Vert f\Vert_{H_{\pm}^{p}}^{p}\overset{\operatorname*{def}}{=}\sup_{y>0}\Vert
f(\cdot\pm\mathrm{i}y)\Vert_{p}<\infty.
\]
We set $H^{p}=H_{+}^{p}.$ Every $f\in H_{\pm}^{p}$ has non-tangential boundary
values $f\left(  x\pm\mathrm{i}0\right)  $ for almost every (a.e.)
$x\in\mathbb{R}$ and%
\begin{equation}
\Vert f\Vert_{H_{\pm}^{p}}=\Vert f\left(  \cdot\pm\mathrm{i}0\right)
\Vert_{L^{p}}=\left\Vert f\right\Vert _{L^{p}}. \label{H^p norm}%
\end{equation}
Recall that $H_{\pm}^{2}$ are Hilbert spaces with the inner product induced
from $L^{2}$. It is well-known that $L^{2}=H^{2}\oplus H_{-}^{2},$ the
orthogonal (Riesz) projection $\mathbb{P}_{\pm}$ onto $H_{\pm}^{2}$ being
given by%
\begin{equation}
(\mathbb{P}_{\pm}f)(x)=\pm\frac{1}{2\pi\mathrm{i}}\lim_{\varepsilon
\rightarrow0+}\int_{\mathbb{R}}\frac{f(s)\mathrm{d}s}{s-(x\pm\mathrm{i}%
\varepsilon)}=:\pm\frac{1}{2\pi\mathrm{i}}\int_{\mathbb{R}}\frac
{f(s)\ \mathrm{d}s}{s-(x\pm\mathrm{i}0)}. \label{proj}%
\end{equation}
The class $H^{\infty}$ is the algebra of functions uniformly bounded in
$\mathbb{C}^{+}$.

Define the Hankel operator on $H^{2}$. Let $(\mathbb{J}f)(x)=f(-x)$ be the
operator of reflection. Given $\varphi\in L^{\infty}$ the operator
$\mathbb{H}(\varphi):H^{2}\rightarrow H^{2}$ given by the formula%
\begin{equation}
\mathbb{H}(\varphi)f=\mathbb{JP}_{-}\varphi f,\ \ \ f\in H^{2}, \label{Hankel}%
\end{equation}
is called the Hankel\emph{ }operator with symbol $\varphi$.

Clearly $\left\Vert \mathbb{H}(\varphi)\right\Vert \leq\left\Vert
\varphi\right\Vert _{L^{\infty}}$, $\mathbb{H}(\varphi)$ is self-adjoint if
$(\mathbb{J}\varphi)(x)=\overline{\varphi\left(  x\right)  }$ (this is always
our case), $\mathbb{H}(\varphi)=0$ if $\varphi$ is a constant, and if
$\varphi\in L^{2}\cap L^{\infty}$%
\begin{equation}
\mathbb{H}(\varphi)=\mathbb{H}(\mathbb{P}_{-}\varphi)\text{.}
\label{regularized Hankel}%
\end{equation}
We will also need the Nehari theorem saying that%
\[
\left\Vert \mathbb{H}(\varphi)\right\Vert \lesssim\left\Vert \mathbb{P}%
_{-}\varphi\right\Vert _{BMO},
\]
where
\[
\left\Vert f\right\Vert _{BMO}=\sup_{I\in\mathbb{R}}\frac{1}{\left\vert
I\right\vert }\int_{I}\left\vert f-f_{I}\right\vert ,\ \ \ f_{I}:=\frac
{1}{\left\vert I\right\vert }\int_{I}f
\]
is the BMO norm (boundary mean oscillation) of $f$.

Finally, the famous Hartman theorem says that $\mathbb{H}(\varphi)$ is a
compact operator if and only if $\varphi\in H^{\infty}+C$ (aka the Sarason
algebra), where $C$ is the set of continuous functions on the one point
compactification of $\mathbb{R}$.

The relevance of the Hankel operator in the IST setting is obvious as the
classical Marchenko operator is a Hankel operator on $L^{2}\left(
0,\infty\right)  $. In our approach, however, Hankel operators on $H^{2}$ are
more convenient.

Finally we emphasize that the use of Hankel-operator methods in the study of
integrable systems has recently gained momentum. In addition to their role in
the classical Marchenko formalism, such methods now appear in
direct-linearization and Grassmannian-flow approaches to a number of local,
nonlocal, and noncommutative integrable equations; see, for example,
\cite{Blower23}, \cite{Doikouetal21}, \cite{DoikouMalhamStylianidisWiese23},
\cite{Malham22}, \cite{BlowerMalham25} and the references cited therein. See
also \cite{GerardPushnitski2023} that develops a spectral theory for unbounded
Hankel operators and shows how their evolution encodes the dynamics of the
cubic Szeg\H{o} equation.

\section{Overview of 1D scattering theory and associated
IST\label{Sect Overview}}

Through this section, unless otherwise stated, we assume that $q$ is supported
on $\left(  0,b\right)  $ with finite $b>0$ and $q\in L^{2}$. Apparently such
$q$'s are subject to (\ref{short range}) and therefore the short-range
scattering theory fully applies. Unless otherwise stated all facts are taken
from \cite{MarchBook2011}. Associate with $q$ the full line Schr\"{o}dinger
operator $\mathbb{L}_{q}=-\partial_{x}^{2}+q(x)$. As is well-known,
$\mathbb{L}_{q}$ is self-adjoint on $L^{2}$ and its spectrum consists of $N$
simple negative eigenvalues $\{-\kappa_{n}^{2}:1\leq n\leq N\},$ called bound
states ($N=0$ if there are no bound states), and two fold absolutely
continuous component filling $\left(  0,\infty\right)  $. There is no singular
continuous spectrum. Two linearly independent (generalized) eigenfunctions of
the a.c. spectrum $\psi_{\pm}(x,k),\;k\in\mathbb{R}$, can be chosen to
satisfy
\begin{equation}
\psi_{+}(x,k)=\mathrm{e}^{\mathrm{i}kx},x>a,\;\psi_{-}(x,k)=\mathrm{e}%
^{-\mathrm{i}kx},x<0.\label{Jost sltns}%
\end{equation}
The function $\psi_{\pm}$, referred to as right/left Jost solution of the
Schr\"{o}dinger equation%
\begin{equation}
\mathbb{L}_{q}\psi=k^{2}\psi,\label{eq6.3}%
\end{equation}
is analytic for $\operatorname{Im}k>0$. For real $k$, introduce ($W\left(
f,g\right)  =fg^{\prime}-f^{\prime}g$)%
\begin{align}
T\left(  k\right)   &  =\frac{2\mathrm{i}k}{W(\psi_{-},\psi_{+})},\ R\left(
k\right)  =\frac{W(\overline{\psi_{+}},\psi_{-})}{W(\psi_{-},\psi_{+}%
)}=T\left(  k\right)  \frac{W(\overline{\psi_{+}},\psi_{-})}{2\mathrm{i}%
k},\label{T}\\
L\left(  k\right)   &  =\frac{W(\psi_{+},\overline{\psi_{-}})}{W(\psi_{-}%
,\psi_{+})}=T\left(  k\right)  \frac{W(\psi_{+},\overline{\psi_{-}}%
)}{2\mathrm{i}k},\ \label{RL}%
\end{align}
where $T,R,L$, called the transmission, right and left reflection coefficients
respectively, which are independent of $x\;$($\partial_{x}\;$is missing in
$\mathbb{L}_{q}$). It is well-known that $T\left(  k\right)  $ is meromorphic
for $\operatorname{Im}k>0$ with finitely many simple poles at $\left(
\mathrm{i}\kappa_{n}\right)  $\footnote{Recall $-\kappa_{n}^{2}\in
\operatorname{Spec}(\mathbb{L}_{q}),\;n=1,2,\ldots,N.$} and continuous for
$\operatorname{Im}k=0$. Generically, $T\left(  0\right)  =0$. Due to
(\ref{Jost sltns}) one concludes \cite{Deift79} that $R$ and $L$ are also
meromorphic in $\mathbb{C}^{+}$ away from $\left(  \mathrm{i}\kappa
_{j}\right)  $. Moreover, and it is very important, for the corresponding
residue\footnote{We will only need $L$.} we have \cite{AK01}%
\begin{equation}
\operatorname{Res}\left(  L,\mathrm{i}\kappa_{n}\right)  =\mathrm{i}%
c_{n},\label{residues}%
\end{equation}
where $c_{n}=\left\Vert \psi_{-}(\cdot,\mathrm{i}\kappa_{n})\right\Vert
_{2}^{-2}$ are (left) norming constants. As is well-known (see, e.g.
\cite{MarchBook2011}), the triple $\{L,(\kappa_{j},c_{,j})\},$ called the
(left) scattering data for $\mathbb{L}_{q}$, determine $q$ uniquely.

What is important to us is that in the case of finite support $L$ alone
determines $q$. This is particularly easy to see from (\ref{T}), (\ref{RL})
that
\begin{equation}
L\left(  k\right)  =\frac{\mathrm{i}k-m_{+}\left(  k^{2}\right)  }%
{\mathrm{i}k+m_{+}\left(  k^{2}\right)  },\label{L}%
\end{equation}
where $m_{+}$ is the (principal) Titchmarsh-Weyl m-function corresponding to
$\left(  0,\infty\right)  $, which is due to the famous Borg-Marchenko
uniqueness result, determines $q$ uniquely. Note that $m_{+}$ is well-defined
for any $q$ subject to the limit point case at $+\infty$, and thus so is $L$
defined by (\ref{L}). The representation (\ref{L}) also implies
\cite{GruRybSIMA15} that $L\left(  k\right)  $ can be analytically continued
into $\operatorname{Im}k>0$ and the sequence of $\left(  c_{n}\right)  $ is
summable, phenomena of the restricted support of $q$. Note that for general
$L^{1}$ potentials ($q$ supported on $\mathbb{R}$) there is no restrictions on
$\left(  c_{n}\right)  $.

Since many of our proofs below are based on limiting arguments we need to
understand in what sense the scattering data of $q_{b}:=q\mathrm{1}_{\left(
0,b\right)  }$ converge to that of $q$ as $b\rightarrow+\infty$. In general we
can only claim that%
\begin{equation}
L_{b}\left(  k\right)  \rightarrow L\left(  k\right)  ,\ \ \ b\rightarrow
+\infty, \label{unif conv}%
\end{equation}
uniformly on compacts of $\mathbb{C}^{+}$, which only guarantees weak
convergence on the real line. (See \cite{RemlingCMP15} for precise
statements). In our case the convergence can be upgraded to uniform away from
$0$ and this is the principal fact.

Note that for $L^{1}$ potentials $L\left(  k\right)  $ is continuous away from
$0$. What happens at zero is yet to be understood. The paper \cite{Yafaev81}
suggests that it is a very difficult question.

Turn now to the IST for (\ref{KdV}). The starting point of our approach is the
following statement\footnote{It is proven under less restrictive conditions.}
from \cite{RybNON24}.

\begin{proposition}
[Trace formula for KdV solution]\label{Prop on trace formula}Suppose $q\in
L^{2}$ and compactly supported. Let
\[
\varphi_{x,t}\left(  k\right)  =%
{\displaystyle\sum_{n=1}^{N}}
\frac{-\mathrm{i}c_{n}\xi_{x,t}\left(  \mathrm{i}\kappa_{n}\right)  ^{-1}%
}{k-\mathrm{i}\kappa_{n}}+\xi_{x,t}\left(  k\right)  ^{-1}L\left(  k\right)  ,
\]
where%
\[
\xi_{x,t}\left(  k\right)  :=\exp\mathrm{i}\left(  8k^{3}t+2kx\right)  .
\]
Then for the solution to (\ref{KdV}) (necessarily unique) we have
\begin{align*}
&  q\left(  x,t\right)  \\
&  =-\partial_{x}\left\{  2%
{\displaystyle\sum_{n=1}^{N}}
c_{n}\xi_{x,t}\left(  \mathrm{i}\kappa_{n}\right)  ^{-1}m\left(
\mathrm{i}\kappa_{n},x,t\right)  +\int_{\mathbb{R}}\xi_{x,t}\left(  k\right)
^{-1}L\left(  k\right)  m\left(  k,x,t\right)  \frac{\mathrm{d}k}{\pi
}\right\}  ,
\end{align*}
where%
\[
m\left(  \cdot,x,t\right)  =1-\left[  I+\mathbb{H}(\varphi_{x,t})\right]
^{-1}\mathbb{H}(\varphi_{b,x,t})1.
\]

\end{proposition}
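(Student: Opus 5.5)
Since $q$ is compactly supported and in $L^{2}$, it satisfies (\ref{short range}). I would therefore work from the classical IST and reduce it to the Hankel-operator form. The Marchenko route: the left scattering data $\{L,(\kappa_n,c_n)\}$ evolve under KdV by $L(k,t)=L(k)\xi_{0,t}(k)^{-1}$ and $c_n(t)=c_n\xi_{0,t}(\mathrm{i}\kappa_n)^{-1}$. This is the standard Lax-pair computation, taken from \cite{MarchBook2011}, with signs fixed by the normalization in (\ref{Jost sltns}). The solution is then
\[
q(x,t)=-2\partial_x K(x,x,t),
\]
where $K$ solves the left Marchenko equation
\[
K(x,y)+F(x+y)+\int_{-\infty}^{x}K(x,s)F(s+y)\,\mathrm{d}s=0,
\]
with kernel
\[
F(s)=\sum_n c_n(t)\mathrm{e}^{\kappa_n s}+\frac{1}{2\pi}\int L(k,t)\mathrm{e}^{-\mathrm{i}ks}\,\mathrm{d}k .
\]
(I keep the left-side convention because the data is left.) Unique solvability of the Marchenko equation for all $t$ is classical in the short-range case. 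So the whole task is to rewrite this equation and the quantity $K(x,x,t)$ in the language of $H^{2}$.

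\textbf{Step 1 (Fourier transfer).} I would shift variables so that the Marchenko operator on $L^{2}(0,\infty)$ has kernel $F(2x+s+u)$, i.e.\ a Hankel integral operator. Conjugating by the Fourier transform, which maps $L^{2}(0,\infty)$ onto $H^{2}$, sends it to $\mathbb{H}(\varphi_{x,t})$. The symbol is exactly $\xi_{x,t}^{-1}L$ plus a rational part. Each bound-state exponential $c_n\mathrm{e}^{\kappa_n(2x+s+u)}$ Fourier-transforms into the simple pole term $-\mathrm{i}c_n\xi_{x,t}(\mathrm{i}\kappa_n)^{-1}/(k-\mathrm{i}\kappa_n)$. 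Using (\ref{regularized Hankel}), only $\mathbb{P}_{-}\varphi$ matters. The key point is that, by (\ref{residues}), the poles of $\xi^{-1}L$ in $\mathbb{C}^+$ are cancelled by the added rational terms. Hence $\varphi_{x,t}$ is the correctly regularized symbol: its $H^\infty$ part does not contribute.

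\textbf{Step 2 (trace formula).} The function $K(x,x+\cdot)$ corresponds, after the transform, to $\mathbb{H}(\varphi_{x,t})$ applied to the constant $1$. This is not in $H^{2}$, but it is legitimate since $\mathbb{H}(\varphi)1=\mathbb{JP}_{-}\varphi$ is in $H^2$. Writing $m=1-[I+\mathbb{H}(\varphi_{x,t})]^{-1}\mathbb{H}(\varphi_{x,t})1$ (I read $\varphi_{b,x,t}$ in the statement as $\varphi_{x,t}$), the Marchenko equation becomes $(I+\mathbb{H})(m-1)=-\mathbb{H}1$. Evaluating $K(x,x)$ amounts to pairing $m$ against the symbol: boundary values give $\int\xi^{-1}Lm\,\mathrm{d}k/\pi$, and residues at $\mathrm{i}\kappa_n$ give the discrete sum $2\sum c_n\xi(\mathrm{i}\kappa_n)^{-1}m(\mathrm{i}\kappa_n)$. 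Differentiating in $x$ gives the formula.

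\textbf{Invertibility.} $I+\mathbb{H}(\varphi_{x,t})$ is invertible because it is unitarily equivalent to $I+$ the Marchenko operator, which is positive definite. Alternatively, $\mathbb{H}$ is self-adjoint and compact by Hartman, since $L$ is continuous and $\xi^{-1}$ is bounded on the real line, and $-1$ is not an eigenvalue.

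The main obstacle I expect is Step 1/2 bookkeeping. This means the precise constants, the factor $2$ and $1/\pi$, and the orientation ($\mathbb{J}$, left vs.\ right data). Above all, it means justifying the pole-cancellation identity at the points $\mathrm{i}\kappa_n$, which relies on the analytic continuation of $L$ available because of compact support. I would verify the constants on a reflectionless one-soliton test case.
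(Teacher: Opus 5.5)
The paper gives no proof of this proposition; it is quoted from \cite{RybNON24}. Your route, conjugating the left Marchenko equation to $H^{2}$ as the remark in Section \ref{HO} suggests, is the natural one and does produce exactly the stated formula. As written, however, it mixes right- and left-side conventions, and it puts the difficulty in the wrong place.

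With $\psi_{-}(x,k)=\mathrm{e}^{-\mathrm{i}kx}+\int_{-\infty}^{x}K(x,y)\mathrm{e}^{-\mathrm{i}ky}\mathrm{d}y$ and $K+F+\int_{-\infty}^{x}KF=0$, one has $K(x,x)=\frac{1}{2}\int_{-\infty}^{x}q$, hence $q=+2\partial_{x}K(x,x)$. Your $-2\partial_{x}K$ is the right-side rule and yields minus the stated formula: on $F(s)=c\mathrm{e}^{\kappa s}$ it gives the bump $+2\kappa^{2}\operatorname{sech}^{2}$, which has no bound state. Likewise $y<x$ forces $y=x-u$, $s=x-v$, so the kernel on $L^{2}(0,\infty)$ is $F(2x-u-v)$, not $F(2x+s+u)$; your terms $c_{n}\mathrm{e}^{\kappa_{n}(2x+s+u)}$ do not even give a bounded operator.

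With the correct kernel, the identification with $\mathbb{H}(\varphi_{x,t})$ is a direct Fourier computation. The continuous part is $\frac{1}{2\pi}\int\xi_{x,t}^{-1}L\,\mathrm{e}^{\mathrm{i}k(u+v)}\mathrm{d}k$. For the bound states, $\frac{1}{2\pi}\int\frac{-\mathrm{i}A_{n}}{k-\mathrm{i}\kappa_{n}}\mathrm{e}^{\mathrm{i}ks}\mathrm{d}k=A_{n}\mathrm{e}^{-\kappa_{n}s}$ for $s>0$, where $A_{n}=c_{n}\xi_{x,t}(\mathrm{i}\kappa_{n})^{-1}$.

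The pole cancellation you single out is true by (\ref{residues}), but it plays no role here. The rational terms are there to generate these exponentials; the cancellation is only used later, to deform $\mathbb{R}$ into $\Gamma$ in Theorem \ref{main thm}. Nor does it say anything about an $H^{\infty}$ part. The function $\varphi_{x,t}$ does continue analytically to $\mathbb{C}^{+}$, but for $t>0$ it is unbounded there, since $|\xi_{x,t}^{-1}|$ grows along horizontal lines. If it were in $H^{\infty}$, its Hankel operator would simply vanish.

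The constants are actually decided in Step 2, and ``pairing $m$ against the symbol'' is not accurate there. Use $m(k)-1=\int_{0}^{\infty}\mathrm{e}^{\mathrm{i}ku}K(x,x-u)\mathrm{d}u$ (i.e.\ $m=\mathrm{e}^{\mathrm{i}kx}\psi_{-}$) and set $y=x$ in the Marchenko equation:
\[
K(x,x)=-F(2x)-\int_{0}^{\infty}K(x,x-v)F(2x-v)\,\mathrm{d}v=-\sum_{n}A_{n}m(\mathrm{i}\kappa_{n})-\frac{1}{2\pi}\int_{\mathbb{R}}\xi_{x,t}^{-1}L\,m\,\mathrm{d}k .
\]
The discrete terms come from the bound-state part of $F$, which gives $\sum_{n}A_{n}\mathrm{e}^{-\kappa_{n}v}$ at $2x-v$, together with $\int_{0}^{\infty}\mathrm{e}^{-\kappa_{n}v}K(x,x-v)\mathrm{d}v=m(\mathrm{i}\kappa_{n})-1$. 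Combined with $q=2\partial_{x}K(x,x)$, this is exactly the proposition.

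If instead you literally integrate $\frac{1}{2\pi}\frac{-\mathrm{i}A_{n}}{k-\mathrm{i}\kappa_{n}}m(k)$ over $\mathbb{R}$ (principal value), you get $A_{n}m(\mathrm{i}\kappa_{n})-A_{n}/2$. The reason is that the Fourier integral returns the midpoint of the jump of $A_{n}\mathrm{e}^{-\kappa_{n}s}\mathrm{1}_{(0,\infty)}(s)$ at $s=0$. The error $A_{n}/2$ depends on $x$ and survives $\partial_{x}$.

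Your positivity argument for invertibility is fine. Finally, for non-smooth $L^{2}$ data, the Lax-pair evolution of \cite{MarchBook2011} does not by itself identify the reconstructed function with the solution of (\ref{KdV}). You need an approximation by smooth compactly supported data, together with $L^{2}$-continuity of the flow \cite{Bourgain93} and convergence of the Hankel-operator side. This is what the paper delegates to \cite{RybNON24}.
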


\section{Trace formula and KdV solutions\label{sect of kdv solutions}}

In this section we state and prove our main result.

\begin{theorem}
\label{main thm}Let $q\left(  x\right)  \in L^{1}\cap L^{2}$ be real and
supported on $\left(  0,\infty\right)  $ and $L\left(  k\right)  $ the
corresponding left reflection coefficient. Then the solution\footnote{The
famous Bourgain theorem \cite{Bourgain93} guaranties well-posedness in $L^{2}%
$.} $q\left(  x,t\right)  $ to the Cauchy problem for the KdV equation
(\ref{KdV}) with initial data $q\left(  x\right)  $ is given by the trace
formula%
\begin{equation}
q\left(  x,t\right)  =-\partial_{x}\int_{\Gamma}\xi_{x,t}\left(  k\right)
^{-1}L\left(  k\right)  m\left(  k,x,t\right)  \frac{\mathrm{d}k}{\pi
},\ \ \ t>0. \label{KdV solution}%
\end{equation}
Here $\Gamma=$ $\left(  -\infty,a\right)  \cup$ $\left[  -a,-a+\mathrm{i}%
a,a+\mathrm{i}a,a\right]  \cup\left(  a,+\infty\right)  $ is the contour
oriented from left to right and $a$ is large enough for $\Gamma$ to be above
all poles of $L\left(  k\right)  $;%
\[
\xi_{x,t}\left(  k\right)  :=\exp\mathrm{i}\left(  8k^{3}t+2kx\right)  ;
\]
and%
\[
m\left(  \cdot,x,t\right)  =1-\left[  I+\mathbb{H}(\Phi_{x,t})\right]
^{-1}\mathbb{J}\Phi_{x,t},
\]
where $(\mathbb{J}f)(x)=f(-x)$ and $\mathbb{H}(\Phi_{x,t})$ is the Hankel
operator (\ref{Hankel}) with symbol%
\[
\Phi_{x,t}\left(  k\right)  =-\int_{\Gamma}\frac{\xi_{x,t}\left(
\lambda\right)  ^{-1}L\left(  \lambda\right)  }{\lambda-\left(  k-\mathrm{i}%
0\right)  }\frac{\mathrm{d}\lambda}{2\pi\mathrm{i}}%
\]

\end{theorem}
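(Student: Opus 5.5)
Our plan is to approximate by truncations $q_b:=q\mathrm{1}_{(0,b)}$, apply Proposition \ref{Prop on trace formula} to each $q_b$, rewrite that formula in contour form, and pass to the limit $b\to\infty$ using the uniform convergence $L_b\to L$ away from $0$ and the continuity of Hankel operators in their symbols.

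\textbf{Step 1: contour form for compact support.} For $q_b$, the residue formula (\ref{residues}) gives $\operatorname{Res}(L_b,\mathrm{i}\kappa_n)=\mathrm{i}c_n$. By the residue theorem, the pole terms in $\varphi_{x,t}$ and the discrete sum in the trace formula are precisely the contributions of the poles of $\xi^{-1}_{x,t}L_b$ lying between $\mathbb{R}$ and $\Gamma$. Concretely, for $k$ below $\Gamma$ the Cauchy integral of $\xi^{-1}L_b$ over $\Gamma$ equals the Cauchy integral over $\mathbb{R}$ minus $\sum_n \frac{c_n\xi(\mathrm{i}\kappa_n)^{-1}}{k-\mathrm{i}\kappa_n}$ up to the appropriate constant. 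Since $\mathbb{H}(\varphi)=\mathbb{H}(\mathbb{P}_-\varphi)$ by (\ref{regularized Hankel}), we obtain $\mathbb{H}(\varphi_{b,x,t})=\mathbb{H}(\Phi_{b,x,t})$, where $\Phi_{b,x,t}$ is the $\Gamma$-Cauchy integral written in the statement. In the same way $\mathbb{H}(\varphi_{b,x,t})1$ is identified with $\mathbb{J}\Phi_{b,x,t}$, interpreted via $\mathbb{JP}_-$ of the regularized symbol, and the bracket in Proposition \ref{Prop on trace formula} collapses to $\int_\Gamma \xi^{-1}L_b m_b\,\mathrm{d}k/\pi$. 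This requires $a$ to lie above the poles of $L_b$ uniformly in $b$. Since $L_b\to L$ uniformly on compacts of $\mathbb{C}^+$ and the poles $\mathrm{i}\kappa_n$ are bounded by $\sqrt{\|q_-\|}$-type bounds (the spectrum of $\mathbb{L}_{q_b}$ is bounded below uniformly in $b$ because $q\in L^1$), such an $a$ exists. For $t>0$, the factor $\xi^{-1}_{x,t}=\exp(-\mathrm{i}(8k^3t+2kx))$ decays on the parts of $\Gamma$ in $\mathbb{C}^+$ and is oscillatory on the real tails, so all integrals converge, conditionally, via the cubic phase.

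\textbf{Step 2: convergence of symbols.} We show $\Phi_{b,x,t}\to\Phi_{x,t}$ in a norm strong enough for Nehari's theorem, namely $BMO$ (or $L^\infty$ after regularization), and also in $H^2$. On the compact part of $\Gamma$, which stays away from $0$, the convergence is uniform by (\ref{unif conv}). On the real tails $|k|>a$, we use the principal fact, uniform convergence $L_b\to L$ away from $0$, together with the $L^2$-type decay of $L_b-L$ at infinity; here $q\in L^2$ gives $L(k)=O(1/k)$ in $L^2$ sense. The tail Cauchy integrals then converge in $L^2\cap BMO$ because the Riesz projection is bounded. The crucial point is that $\Gamma$ bypasses $0$, so the bad point of $L$ never enters.

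\textbf{Step 3: invertibility and passage to the limit.} $\mathbb{H}(\Phi_{b,x,t})$ is self-adjoint and $I+\mathbb{H}(\Phi_{b,x,t})\ge0$ is invertible (as in \cite{RybNON24}). For the limit we need $I+\mathbb{H}(\Phi_{x,t})$ to be invertible. We plan to obtain this from compactness, since Hartman applies because $\Phi$ is smooth away from infinity and the tails are in $H^\infty+C$, combined with norm convergence $\mathbb{H}(\Phi_b)\to\mathbb{H}(\Phi)$. This yields $I+\mathbb{H}(\Phi)\ge0$; a kernel element would be excluded by a positivity/Fredholm argument mirroring the finite-$b$ case. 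Once invertibility holds, the inverses converge in norm, $m_b\to m$ in $H^2$, and the bracket converges locally uniformly in $x$. Meanwhile $q_b(x,t)\to q(x,t)$ in $L^2$ by Bourgain's well-posedness, since $q_b\to q$ in $L^2$. Differentiation in $x$ passes to the limit in the distributional sense, which gives (\ref{KdV solution}).

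The main obstacle we expect is Step 2 on the real tails near $|k|=a$, together with the uniform invertibility in Step 3. Both rest on the unproven "principal fact" that $L_b\to L$ uniformly on $|k|\ge a$ with suitable decay at infinity. We would first try to prove this from (\ref{L}) via uniform convergence of $m_+$ for $q_b$ on $\{|k|\ge a\}$, using $L^1$ Volterra estimates for the Jost solutions at fixed nonzero $k$.
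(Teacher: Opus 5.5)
Your Steps 1 and 2 follow the paper's proof: truncation, contour deformation via $\operatorname{Res}(L_b,\mathrm{i}\kappa_{b,n})=\mathrm{i}c_{b,n}$, BMO bounds for Cauchy integrals over $\Gamma$ plus Nehari, and uniform convergence of $L_b$ on $\Gamma$ from $L^1$ Volterra estimates (the paper's Lemma \ref{Lemma on unif conv}).

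The genuine gap is in Step 3. Bounded invertibility of $I+\mathbb{H}(\Phi_{x,t})$ is the heart of the matter, and you assert it rather than prove it. For finite $b$, invertibility comes from $\mathbb{H}(\Phi_{b,x,t})\geq\mathbb{H}(\xi_{x,t}^{-1}L_b)$ together with Marchenko's bound $\|\mathbb{H}(\xi_{x,t}^{-1}L_b)\|<1$. This gives $I+\mathbb{H}(\Phi_{b,x,t})\geq\alpha_b I$, but with no control of $\alpha_b$ as $b\to\infty$. Norm convergence therefore only yields $I+\mathbb{H}(\Phi_{x,t})\geq0$. Positivity plus Fredholmness cannot exclude a kernel, so ``mirroring the finite-$b$ case'' has no content unless you say what replaces Marchenko's bound in the limit. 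Your diagnosis that this rests on the ``principal fact'' is also off. Uniform convergence away from $0$ gives $\|\mathbb{H}(\Phi_{x,t}-\Phi_{b,x,t})\|\to0$, but the threat to invertibility sits at $k=0$, where $L$ may be discontinuous and $|L(k)|$ may tend to $1$.

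The missing ingredient is the strict contractivity $|L(k)|<1$, $k\neq0$, of the limiting reflection coefficient, used in a way that survives the possible discontinuity at $0$.

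The paper does this by proving $\|\mathbb{H}(\xi_{x,t}^{-1}L)\|<1$. It writes $L=S/B$ with $S\in H^\infty\cap H^2$ and $B$ a Blaschke product (convergent by Lieb--Thirring) and uses $\xi_{x,t}\in H^\infty+C$ for $t>0$. The Guillory--Sarason division theorem then gives $\xi_{x,t}^{-1}L\in H^\infty+C$, so $\mathbb{H}(\xi_{x,t}^{-1}L)$ is compact by Hartman and Lemma \ref{Lemma on norm} applies. Hence $I+\mathbb{H}(\Phi_{x,t})\geq I+\mathbb{H}(\xi_{x,t}^{-1}L)\geq\alpha I$.

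Your Fredholm framework can also be completed, but you must supply the argument:
\begin{itemize}
\item $\mathbb{H}(\Phi_{x,t})$ is compact as the norm limit of the compact $\mathbb{H}(\Phi_{b,x,t})$, whose symbols lie in $C$ because $q_b$ has compact support. This is cleaner than your ``smooth away from infinity'' justification.
\item $\mathbb{H}(\Phi_{x,t})-\mathbb{H}(\xi_{x,t}^{-1}L)\geq0$ as the weak limit of the positive pole parts. Indeed $\langle\mathbb{H}(\xi_{x,t}^{-1}L_b)f,g\rangle=\int\xi_{x,t}^{-1}L_bf\,\overline{\mathbb{J}g}$ converges by dominated convergence, since $L_b\to L$ pointwise off $0$ and $|L_b|\leq1$.
\item An eigenvector $\mathbb{H}(\Phi_{x,t})f=-f$ with $\|f\|_2=1$ would then give $-1\geq\langle\mathbb{H}(\xi_{x,t}^{-1}L)f,f\rangle\geq-\int|L||f||\mathbb{J}f|>-1$, a contradiction.
\end{itemize}

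Two minor points. First, (\ref{unif conv}) alone does not control $L_b$ near the corners $\pm a$ of $S_a$; you need the Volterra bound on $\{|k|\geq a\}\cap\overline{\mathbb{C}^+}$. Second, $\xi_{x,t}^{-1}$ does not decay on $S_a$: at $\pm a+\mathrm{i}a$ its modulus is $\mathrm{e}^{16ta^3+2xa}$. Only its boundedness there is used.
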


\begin{proof}
Our approach is based on limiting arguments. Let $q_{b}\left(  x\right)  $ be
the restriction of $q\left(  x\right)  $ to $\left(  0,b\right)  $. The full
machinery of the classical inverse scattering transform then clearly applies.
In particular, by Proposition \ref{Prop on trace formula}%
\begin{align}
&  q_{b}\left(  x,t\right)  \label{q_b}\\
&  =-\partial_{x}\left\{  2%
{\displaystyle\sum_{n=1}^{N_{b}}}
c_{b,n}\xi_{x,t}\left(  \mathrm{i}\kappa_{b,n}\right)  ^{-1}m_{b}\left(
\mathrm{i}\kappa_{n},x,t\right)  +\frac{1}{\pi}\int_{\mathbb{R}}\xi
_{x,t}\left(  k\right)  ^{-1}L_{b}\left(  k\right)  m_{b}\left(  k,x,t\right)
\mathrm{d}k\right\}  ,\nonumber
\end{align}
where all quantities with subscript $b$ correspond to those of $q_{b}$. For
$m_{b}$ we have%
\begin{equation}
m_{b}\left(  \cdot,x,t\right)  =1-\left[  I+\mathbb{H}(\varphi_{b,x,t}%
)\right]  ^{-1}\mathbb{H}(\varphi_{b,x,t})1,\label{m_b}%
\end{equation}
where%
\[
\varphi_{b,x,t}\left(  k\right)  =%
{\displaystyle\sum_{n=1}^{N_{b}}}
\frac{-\mathrm{i}c_{b,n}\xi_{x,t}\left(  \mathrm{i}\kappa_{b,n}\right)  ^{-1}%
}{k-\mathrm{i}\kappa_{b,n}}+\xi_{x,t}\left(  k\right)  ^{-1}L_{b}\left(
k\right)  .
\]
Note now that since $L_{b}\left(  k\right)  $ is meromorphic in
$\operatorname{Im}k>0$ with simple poles at \textrm{$i$}$\kappa_{b,n}$, the
integrand in (\ref{q_b}) $\xi_{x,t}\left(  k\right)  ^{-1}L_{b}\left(
k\right)  m_{b}\left(  k,x,t\right)  $ \ is also meromorphic in
$\operatorname{Im}k>0$, with the same poles. For residues we have%
\[
\operatorname*{Res}\left\{  \xi_{x,t}\left(  k\right)  ^{-1}L_{b}\left(
k\right)  m_{b}\left(  k,x,t\right)  ,\mathrm{i}\kappa_{b,n}\right\}
=\mathrm{i}c_{b,n}\xi_{x,t}\left(  \mathrm{i}\kappa_{b,n}\right)  ^{-1}%
m_{b}\left(  \mathrm{i}\kappa_{n},x,t\right)  ,
\]
where we have used the important fact that $\operatorname*{Res}\left\{
L_{b}\left(  k\right)  ,\mathrm{i}\kappa_{b,n}\right\}  =$\textrm{$i$}%
$c_{b,n}$. By deforming the contour $\mathbb{R}$ to $\Gamma$ we obviously
eliminate the sum in (\ref{q_b}) and therefore get%
\begin{equation}
q_{b}\left(  x,t\right)  =-\partial_{x}\int_{\Gamma}\xi_{x,t}\left(  k\right)
^{-1}L_{b}\left(  k\right)  m_{b}\left(  k,x,t\right)  \frac{\mathrm{d}k}{\pi
},\label{q_b 1}%
\end{equation}
which is our starting point for taking $b$ to infinity. Turn to $m_{b}$ given
by (\ref{m_b}). Due to (\ref{regularized Hankel}),
\[
\mathbb{H}(\varphi_{b,x,t})=\mathbb{H}(\Phi_{b,x,t}),
\]
where%
\begin{align}
\Phi_{b,x,t}\left(  k\right)   &  =\mathbb{P}_{-}\varphi_{b,x,t}%
=\mathbb{P}_{-}\left[
{\displaystyle\sum_{n=1}^{N_{b}}}
\frac{-\mathrm{i}c_{b,n}\xi_{x,t}\left(  \mathrm{i}\kappa_{b,n}\right)  ^{-1}%
}{k-\mathrm{i}\kappa_{b,n}}\right]  +\mathbb{P}_{-}\left[  \xi_{x,t}\left(
k\right)  ^{-1}L_{b}\left(  k\right)  \right]  \label{pre big fi}\\
&  =%
{\displaystyle\sum_{n=1}^{N_{b}}}
\frac{-\mathrm{i}c_{b,n}\xi_{x,t}\left(  \mathrm{i}\kappa_{b,n}\right)  ^{-1}%
}{k-\mathrm{i}\kappa_{b,n}}-\int_{\mathbb{R}}\frac{\xi_{x,t}\left(
\lambda\right)  ^{-1}L_{b}\left(  \lambda\right)  }{\lambda-\left(
k-\mathrm{i}0\right)  }\frac{\mathrm{d}\lambda}{2\pi\mathrm{i}};\nonumber
\end{align}
Noticing that $\mathbb{P}_{-}\left(  k-\mathrm{i}\kappa_{n}\right)
^{-1}=\left(  k-\mathrm{i}\kappa_{n}\right)  ^{-1}$ and deforming the contour
as we did above, we arrive at%
\[
\Phi_{b,x,t}\left(  k\right)  =-\int_{\Gamma}\frac{\xi_{x,t}\left(
\lambda\right)  ^{-1}L_{b}\left(  \lambda\right)  }{\lambda-\left(
k-\mathrm{i}0\right)  }\frac{\mathrm{d}\lambda}{2\pi\mathrm{i}}.
\]
Introduce%
\[
\Phi_{x,t}\left(  k\right)  =-\int_{\Gamma}\frac{\xi_{x,t}\left(
\lambda\right)  ^{-1}L\left(  \lambda\right)  }{\lambda-\left(  k-\mathrm{i}%
0\right)  }\frac{\mathrm{d}\lambda}{2\pi\mathrm{i}}.
\]
We show now that%
\begin{equation}
\left\Vert \mathbb{H}(\Phi_{x,t}-\Phi_{b,x,t})\right\Vert \rightarrow
0,\ \ \ b\rightarrow+\infty.\label{conv}%
\end{equation}
Consider%
\[
\varphi_{b}\left(  k\right)  :=\int_{\Gamma}\frac{\xi_{x,t}\left(
\lambda\right)  ^{-1}\left[  L\left(  k\right)  -L_{b}\left(  \lambda\right)
\right]  }{\lambda-\left(  k-\mathrm{i}0\right)  }\mathrm{d}\lambda.
\]
By Lemma \ref{Lemma on Cauchy transform}%
\[
\left\Vert \varphi_{b}\right\Vert _{BMO}\lesssim\left\Vert L-L_{b}\right\Vert
_{L^{\infty}\left(  \Gamma\right)  }.
\]
Since by Lemma \ref{Lemma on unif conv} $\sup_{\Gamma}\left\vert
L-L_{b}\right\vert \rightarrow0$, the Nehari theorem yields (\ref{conv}).

Since $1$ is not in $H^{2}$ we have to treat the convergence of $\mathbb{H}%
(\Phi_{x,t}-\Phi_{b,x,t})1$ separately. Observe that%
\[
\mathbb{H}(\varphi)1=\mathbb{J}\varphi
\]
if $\varphi\in H_{-}^{2}$, and thus the problem boils down to the convergence
of $\Phi_{x,t}-\Phi_{b,x,t}$ in $L^{2}$. We have
\begin{align*}
&  \left\Vert \Phi_{x,t}-\Phi_{b,x,t}\right\Vert _{2}\\
&  \leq\left\Vert \mathbb{P}_{-}\left\{  \mathrm{1}_{\left\vert \cdot
\right\vert \geq a}\xi_{x,t}{}^{-1}\left(  L-L_{b}\right)  \right\}
\right\Vert _{2}+\left\Vert \int_{S_{a}}\frac{\xi_{x,t}\left(  \lambda\right)
^{-1}\left(  L-L_{b}\right)  \left(  \lambda\right)  }{\lambda-\cdot}%
\frac{\mathrm{d}\lambda}{2\pi}\right\Vert _{2}\\
&  \leq\left\Vert L-L_{b}\right\Vert _{2}+\left\Vert \int_{S_{a}}\frac
{\xi_{x,t}\left(  \lambda\right)  ^{-1}\left(  L-L_{b}\right)  \left(
\lambda\right)  }{\lambda-\cdot}\frac{\mathrm{d}\lambda}{2\pi}\right\Vert
_{2}.
\end{align*}
By Lemma \ref{Lemma on Cauchy transform} we have
\[
\left\Vert \int_{S_{a}}\frac{\xi_{x,t}\left(  \lambda\right)  ^{-1}\left(
L-L_{b}\right)  \left(  \lambda\right)  }{\lambda-\cdot}\frac{\mathrm{d}%
\lambda}{2\pi}\right\Vert _{2}\lesssim\left\Vert \xi_{x,t}{}^{-1}\right\Vert
_{L^{\infty}\left(  S_{a}\right)  }\left\Vert L-L_{b}\right\Vert _{L^{\infty
}\left(  S_{a}\right)  },
\]
where $S_{a}=\left[  -a,-a+\mathrm{i}a,a+\mathrm{i}a,a\right]  ,$ and thus%
\[
\left\Vert \Phi_{x,t}-\Phi_{b,x,t}\right\Vert _{2}\lesssim\left\Vert
L-L_{b}\right\Vert _{2}+\left\Vert \xi_{x,t}{}^{-1}\right\Vert _{L^{\infty
}\left(  S_{a}\right)  }\left\Vert L-L_{b}\right\Vert _{L^{\infty}\left(
S_{a}\right)  }.
\]
The first term on the right hand side vanishes as $b\rightarrow\infty$ due to
Lemma \ref{Lemma on L} and the second one due to (\ref{unif conv}). Note that
$\left\Vert \xi_{x,t}{}^{-1}\right\Vert _{L^{\infty}\left(  S_{a}\right)  }$
remains bounded for each fixed $x$ and $t>0$.

We show next that%
\begin{equation}
\left\Vert \left(  I+\mathbb{H}(\Phi_{b,x,t})\right)  ^{-1}-\left(
I+\mathbb{H}(\Phi_{x,t})\right)  ^{-1}\right\Vert \rightarrow
0,\ \ b\rightarrow+\infty. \label{resolvent conv}%
\end{equation}
To this end we need to show that Lemma \ref{Lemma on unif converrgence}
applies. Noting that%
\[
\mathbb{H}\left(
{\displaystyle\sum_{n\geq1}}
\frac{-\mathrm{i}c_{b,n}\xi_{x,t}\left(  \mathrm{i}\kappa_{b,n}\right)  ^{-1}%
}{\cdot-\mathrm{i}\kappa_{b,n}}\right)  \geq0\text{ (positive operator)}%
\]
and deforming $\Gamma$ back to $\mathbb{R}$, we have
\begin{align*}
\mathbb{H}(\Phi_{b,x,t})  &  =\mathbb{H}(\varphi_{b,x,t})=\mathbb{H}\left(
{\displaystyle\sum_{n\geq1}}
\frac{-\mathrm{i}c_{b,n}\xi_{x,t}\left(  \mathrm{i}\kappa_{b,n}\right)  ^{-1}%
}{\cdot-\mathrm{i}\kappa_{b,n}}+\xi_{x,t}{}^{-1}L_{b}\right) \\
&  \geq\mathbb{H}\left(  \xi_{x,t}{}^{-1}L_{b}\right)  .
\end{align*}
It is well-know from the Marchenko theory that $\left\Vert \mathbb{H}\left(
\xi_{x,t}{}^{-1}L_{b}\right)  \right\Vert <1$ and thus%
\[
I+\mathbb{H}(\Phi_{b,x,t})\geq I+\mathbb{H}\left(  \xi_{x,t}{}^{-1}%
L_{b}\right)  \geq\alpha_{b}I
\]
with some $0<\alpha_{b}<1-\left\Vert \mathbb{H}\left(  \xi_{x,t}{}^{-1}%
L_{b}\right)  \right\Vert $. This guarantees boundedness of $\left(
I+\mathbb{H}(\Phi_{b,x,t})\right)  ^{-1}$. To justify (\ref{resolvent conv})
we need same bound for $b=\infty$ as well. To this end, let us look into
$\mathbb{H}\left(  \xi_{x,t}{}^{-1}L\right)  $. We rely on the following
important result \cite{GuillorySarason81}: If $f\in H^{\infty}\cap H^{2}$ and
$g\in H^{\infty}+C$ then $f/g\in H^{\infty}+C$ (the Guillory-Sarason
theorem).\ By the Hartman theorem, this means then that $\mathbb{H}(f/g)$ is
compact. Note that $L$ has the following structure \cite{Ryb2006}
\[
L=S/B,
\]
where $S\in H^{\infty}\cap H^{2}$ and%
\[
B\left(  k\right)  =%
{\displaystyle\prod_{n\geq1}}
\frac{k-\mathrm{i}\kappa_{n}}{k+\mathrm{i}\kappa_{n}}%
\]
is an infinite Blaschke product, which is convergent due to the Lieb-Thirring
inequality $%
{\displaystyle\sum_{n\geq1}}
\kappa_{n}\leq\frac{1}{2}\left\Vert q\right\Vert _{1}$
\cite{Hundertmarketal1998}. Recall that $H^{\infty}+C$ is an algebra (aka the
Sarason algebra) and hence since $\xi_{x,t}\in H^{\infty}+C,$ for $t>0$
\cite{GruRybSIMA15}, and $B\in H^{\infty}$ we conclude that $B\xi_{x,t}\in
H^{\infty}+C$ and by the Guillory-Sarason theorem $L/\left(  B\xi
_{x,t}\right)  {}\in H^{\infty}+C$ and hence $\mathbb{H}\left(  \xi_{x,t}%
{}^{-1}L\right)  $ is compact. This is a crucial moment in proving that
\begin{equation}
\left\Vert \mathbb{H}\left(  \xi_{x,t}{}^{-1}L\right)  \right\Vert <1.
\label{<1}%
\end{equation}
Indeed, since $\left\vert L\left(  k\right)  \right\vert <1$, $k\neq0$, Lemma
\ref{Lemma on norm} clearly applies and (\ref{<1}) immediately \ Thus $\left(
I+\mathbb{H}(\Phi_{b,x,t})\right)  ^{-1}$, $\left(  I+\mathbb{H}(\Phi
_{x,t})\right)  ^{-1}$ are bounded. Due to (\ref{conv}), Lemma
\ref{Lemma on unif converrgence} then yields (\ref{resolvent conv}). We can
now conclude that in $H^{2}$%
\begin{align*}
m_{b}\left(  \cdot,x,t\right)  -1  &  =-\left[  I+\mathbb{H}(\Phi
_{b,x,t})\right]  ^{-1}\mathbb{J}\Phi_{b,x,t}\\
&  \rightarrow-\left[  I+\mathbb{H}(\Phi_{x,t})\right]  ^{-1}\mathbb{J}%
\Phi_{x,t}=:m\left(  \cdot,x,t\right)  -1,\ \ \ b\rightarrow+\infty,
\end{align*}
where $m\left(  \cdot,x,t\right)  \in H^{2}$.

We now show that also in $H^{2}$
\[
\partial_{x}m_{b}\left(  \cdot,x,t\right)  \rightarrow\partial_{x}m\left(
\cdot,x,t\right)  ,\ \ \ b\rightarrow+\infty.
\]
Since%
\begin{align*}
&  \partial_{x}y\left(  \cdot,x,t\right)  \\
&  =\left[  I+\mathbb{H}(\Phi_{x,t})\right]  ^{-1}\partial_{x}\mathbb{H}%
\left(  \Phi_{x,t}\right)  \left[  I+\mathbb{H}(\Phi_{x,t})\right]
^{-1}\mathbb{J}\Phi_{x,t}-\left[  I+\mathbb{H}(\Phi_{x,t})\right]
^{-1}\partial_{x}\mathbb{J}\left(  \Phi_{x,t}\right)
\end{align*}
we only need to show that the differentiated Hankel operators also converge in
norm I.e. to show that%

\begin{equation}
\left\Vert \partial_{x}\mathbb{H}(\Phi_{x,t}-\Phi_{b,x,t})\right\Vert
\rightarrow0,b\rightarrow+\infty. \label{eq conv}%
\end{equation}
Since
\[
\partial_{x}\Phi_{x,t}\left(  k\right)  =\int_{\Gamma}\frac{\lambda\xi
_{x,t}\left(  \lambda\right)  ^{-1}L\left(  \lambda\right)  }{\lambda-\left(
k-\mathrm{i}0\right)  }\frac{\mathrm{d}\lambda}{\pi}%
\]
we have as before%
\[
\left\Vert \partial_{x}\mathbb{H}(\Phi_{x,t}-\Phi_{b,x,t})\right\Vert
\lesssim\left\Vert \xi_{x,t}\left(  \lambda\right)  ^{-1}\lambda\left(
L-L_{b}\right)  \left(  \lambda\right)  \right\Vert _{L^{\infty}\left(
\Gamma\right)  }%
\]
and (\ref{eq conv}) follows by Lemma \ref{Lemma on unif converrgence}.

It remains to pass to the limit in%
\begin{align*}
q_{b}\left(  x,t\right)   &  =-\partial_{x}\int_{\Gamma}\xi_{x,t}\left(
k\right)  ^{-1}L_{b}\left(  k\right)  m_{b}\left(  k,x,t\right)
\frac{\mathrm{d}k}{\pi}\\
&  =\int_{\Gamma}2\mathrm{i}k\xi_{x,t}\left(  k\right)  ^{-1}L_{b}\left(
k\right)  \frac{\mathrm{d}k}{\pi}+\int_{\Gamma}2\mathrm{i}k\xi_{x,t}\left(
k\right)  ^{-1}L_{b}\left(  k\right)  y_{b}\left(  k,x,t\right)
\frac{\mathrm{d}k}{\pi}\\
&  -\int_{\Gamma}\xi_{x,t}\left(  k\right)  ^{-1}L_{b}\left(  k\right)
\partial_{x}y_{b}\left(  k,x,t\right)  \frac{\mathrm{d}k}{\pi}\\
&  =:I_{1}+I_{2}+I_{3},
\end{align*}
where%
\[
y_{b}\left(  \cdot,x,t\right)  =-\left[  I+\mathbb{H}(\Phi_{b,x,t})\right]
^{-1}\mathbb{JP}_{-}(\Phi_{b,x,t}).
\]
By Lemma \ref{Lemma on L}%
\[
I_{1}\rightarrow\int_{\Gamma}2\mathrm{i}k\xi_{x,t}\left(  k\right)
^{-1}L\left(  k\right)  \frac{\mathrm{d}k}{\pi},\ \ \ b\rightarrow\infty.
\]
Similarly%
\[
I_{2}\rightarrow\int_{\Gamma}2\mathrm{i}k\xi_{x,t}\left(  k\right)
^{-1}L\left(  k\right)  y\left(  k,x,t\right)  \frac{\mathrm{d}k}{\pi
}\ \ b\rightarrow\infty,
\]
and%
\[
I_{3}\rightarrow-\int_{\Gamma}\xi_{x,t}\left(  k\right)  ^{-1}L\left(
k\right)  \partial_{x}y\left(  k,x,t\right)  \frac{\mathrm{d}k}{\pi
}\ \ b\rightarrow\infty.
\]
Thus (\ref{KdV solution}) is proven. Due to the Bourgain theorem, $q\left(
x,t\right)  $ is the solution to (\ref{KdV}) with initial data $q\left(
x\right)  $ for all $t>0$.
\end{proof}

A few remarks are now in order.

\begin{remark}
If the initial data $q$ are supported on $\left(  -\infty,0\right)  $ (instead
of $\left(  0,\infty\right)  $) then a better than (\ref{KdV solution})
representation for the solution $q\left(  x,t\right)  $ holds
\cite{GruRybSIMA15}, \cite{GruRybBLMS20}. Namely,%
\[
q\left(  x,t\right)  =-\partial_{x}^{2}\log\det\left(  I+\mathbb{H}\left(
\Phi_{x,t}\right)  \right)
\]
where%
\[
\Phi_{x,t}\left(  k\right)  =-\int_{\operatorname{Im}\lambda=a}\frac{\xi
_{x,t}\left(  \lambda\right)  R\left(  \lambda\right)  }{\lambda-\left(
k-\mathrm{i}0\right)  }\frac{\mathrm{d}\lambda}{2\pi\mathrm{i}}.
\]
The main difference between this symbol and the one from Theorem
\ref{main thm} is the positive power of $\xi_{x,t}\left(  \lambda\right)  $
which allows us to deform $\mathbb{R}$ to the horizontal line in
$\mathbb{C}^{+}$ where $\left\vert \xi_{x,t}\right\vert $ decays exponentially
fast. The latter in turn allows us to pass to the limit under the only
assumption that%
\[
\sup_{\left\vert I\right\vert =1}\int_{I}q_{-}<\infty,\ \ \text{where }%
q_{-}\left(  x\right)  =\max\left\{  -q\left(  x\right)  ,0\right\}  ,
\]
which does not impose much of restriction on the behavior at $-\infty$ and
includes among others completely repulsive potentials: $\left\vert R\left(
k\right)  \right\vert =1$ for a.e. $k\in\mathbb{R}$. Moreover, $q\left(
x,t\right)  $ is a meromorphic function in $x$ for each fixed $t>0$
\cite{RybCommPDEs2013}. This is a very pronounced manifestation of the
unidirectional nature of the KdV equation.
\end{remark}

\begin{remark}
We are of course not first to approach KdV with summable initial profiles $q$
but to the best of our knowledge the previous effort is concentrated on
reflectionless $q$'s (i.e. $R=0$ identically). In \cite{GesztesyDuke92} the
limit of the classical formula for pure soliton solution is studied. This
approach yields a certain subclass of reflectionless $L^{1}$ potentials.
Another approach, based on Darboux transformation is studied in
\cite{DegaspShabat94}. The complete treatment of the reflectionless case for
$L^{1}$ potentials is recently given in \cite{Hryniv21} by means of the IST.
As opposed to the previous works the latter one imposes no restriction on
norming constants $\left(  c_{n}\right)  $, save $c_{n}>0$. We emphasize that
in our restricted support case $\left(  c_{n}\right)  $ is a summable sequence
which is a general feature of initial data with support restricted to a half
line. Of course, there are no reflectionless potentials with support different
from the whole line. Finally we mention that uniform closure of reflectionless
potentials is first studied in \cite{Marchenko91} but no analysis of decay is
given therein.
\end{remark}

\begin{remark}
The main problem with general $L^{1}$ potentials, as was mentioned above, is
that the complicated small energy behavior of scattering quantities. The main
feature of our situation is that this analysis is not necessary as we, due to
analyticity of the reflection coefficient, may loosely speaking go around the
origin, the only point where bad things may happen. This provides much needed
uniform convergence. For $L^{2}$ initial data we have only uniform convergence
is lost which is a serious barrier for pushing our approach beyond $L^{1}$
initial data.
\end{remark}

\begin{remark}
We a priori know that $q\left(  x,t\right)  \in L^{2}$ but its support becomes
the whole real line once $t>0$. Moreover, $q\left(  x,t\right)  $ behaves as
the Fourier transform of $kL\left(  k\right)  \mathrm{e}^{-8\mathrm{i}k^{3}t}%
$, which is in $L^{2}$.
\end{remark}

\section{Appendix}

The following lemma is likely well-known and valid in much more generality.
Since we cannot find an exact reference we give a simple proof specific to our situation.

\begin{lemma}
\label{Lemma on Cauchy transform}Let $f\in L^{\infty}\left(  S_{a}\right)  $
where $S_{a}=\left[  -a,-a+\mathrm{i}a,a+\mathrm{i}a,a\right]  $, $a>0$, is a
polygonal contour (rectangle). If
\[
F\left(  k\right)  =\int_{S_{a}}\frac{f\left(  \lambda\right)  \mathrm{d}%
\lambda}{\lambda-k},\ \ \ k\in\mathbb{R}\text{,}%
\]
then $F\in BMO\left(  \mathbb{R}\right)  $%
\[
\left\Vert F\right\Vert _{BMO}\lesssim\left\Vert f\right\Vert _{L^{\infty
}\left(  S_{a}\right)  }.
\]
Moreover, $F\in L^{2}$ and $\left\Vert F\right\Vert _{2}\lesssim\left\Vert
f\right\Vert _{L^{\infty}\left(  S_{a}\right)  }.$
\end{lemma}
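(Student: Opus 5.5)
We split $S_a$ into its three straight pieces: the vertical segment $V_-=[-a,-a+\mathrm{i}a]$, the horizontal segment $H=[-a+\mathrm{i}a,a+\mathrm{i}a]$, and the vertical segment $V_+=[a+\mathrm{i}a,a]$. Correspondingly $F=F_-+F_H+F_+$, and it suffices to bound each piece in $BMO$ and in $L^2$ by $\|f\|_{L^\infty(S_a)}$. Note that $L^\infty$ embeds in $BMO$ with $\|g\|_{BMO}\le 2\|g\|_\infty$. So it is enough to show that each piece is a bounded function away from the endpoints $\pm a$ of $S_a$, and that near $\pm a$ it has at worst a logarithmic singularity, which is in $BMO$.

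\emph{Horizontal piece.} For $k\in\mathbb{R}$ and $\lambda\in H$ we have $|\lambda-k|\ge a$. Hence
\[
|F_H(k)|\le\|f\|_\infty\int_H\frac{|\mathrm{d}\lambda|}{|\lambda-k|}\le 2\|f\|_\infty .
\]
For $|k|\ge 2a$ we also have $|\lambda-k|\ge|k|/2$, so $|F_H(k)|\lesssim a\|f\|_\infty/|k|$. Together these give $F_H\in L^\infty\cap L^2$ with both norms $\lesssim_a\|f\|_\infty$.

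\emph{Vertical pieces.} Take $V_+$ and parametrize $\lambda=a+\mathrm{i}s$, $0\le s\le a$. Then $|\lambda-k|=\sqrt{(a-k)^2+s^2}$, and
\[
|F_+(k)|\le\|f\|_\infty\int_0^a\frac{\mathrm{d}s}{\sqrt{(k-a)^2+s^2}}=\|f\|_\infty\,\operatorname{arcsinh}\frac{a}{|k-a|}.
\]
This bound is $\lesssim 1+\log^+\frac{a}{|k-a|}$ near $k=a$ and $\lesssim a/|k-a|$ for large $|k|$. Consequently $|F_+|\le\|f\|_\infty\,g(k-a)$ with $g\in L^2$ a fixed function, which gives the $L^2$ bound. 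The same applies to $F_-$.

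\emph{The $BMO$ bound — main obstacle.} A pointwise majorant by a $BMO$ function does not imply membership in $BMO$, so the $BMO$ estimate needs more care than the $L^2$ one. My approach is to write $F_+=u+\mathrm{i}v$ locally and use that $F_+$ is analytic off $V_+$, hence smooth on $\mathbb{R}\setminus\{a\}$. Differentiating under the integral gives
\[
|F_+'(k)|\le\|f\|_\infty\int_0^a\frac{\mathrm{d}s}{(k-a)^2+s^2}\le\frac{\pi}{2}\,\frac{\|f\|_\infty}{|k-a|}.
\]

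Now fix an interval $I$ and estimate its mean oscillation. There are two cases.

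\begin{enumerate}
\item If $\operatorname{dist}(a,I)\ge|I|$, the mean value theorem gives oscillation $\lesssim\|f\|_\infty|I|/\operatorname{dist}(a,I)\lesssim\|f\|_\infty$.
\item If $\operatorname{dist}(a,I)<|I|$, then $I\subset J:=[a-2|I|,a+2|I|]$. Compare $F_+$ with the constant $c_I=F_+(a+2|I|)$. For $k\in J$, integrating the derivative bound along the real axis would cross the singularity at $a$. Instead, bound $|F_+(k)-c_I|$ directly by $\|f\|_\infty\int_0^a\bigl|\frac{1}{\lambda-k}-\frac{1}{\lambda-k_0}\bigr|\,\mathrm{d}s$ with $k_0=a+2|I|$. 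Splitting at $s=|I|$, the part $s\ge|I|$ contributes $O(1)$. The part $s<|I|$ contributes at most $\log^+\frac{|I|}{|k-a|}+O(1)$, whose average over $I$ is $O(1)$.
\end{enumerate}

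Hence $\frac1{|I|}\int_I|F_+-c_I|\lesssim\|f\|_\infty$ in both cases, so $\|F_+\|_{BMO}\lesssim\|f\|_\infty$. The same holds for $F_-$. Adding the three pieces completes the proof of both claims. The constants depend only on $a$, which is fixed in the application.
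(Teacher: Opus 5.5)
Your proof is correct. For the top side and for the $L^{2}$ bounds it coincides with the paper's argument: boundedness plus $O(1/|k|)$ decay for the top side, and the $\operatorname{arsinh}(a/|k-a|)$ majorant for the vertical sides.

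The two proofs differ in the $BMO$ estimate for the vertical pieces. The paper scales to $a=1$ and writes $F_{+}$ as a Cauchy-type transform of $g_{+}(y)=f(1+\mathrm{i}y)\mathrm{1}_{(0,1)}(y)$. It then simply invokes the Calder\'on--Zygmund theorem ($L^{\infty}\to BMO$). You instead estimate the mean oscillation by hand. For intervals far from $a$ you use the derivative bound $|F_{+}'(k)|\le\frac{\pi}{2}\|f\|_{\infty}/|k-a|$. For intervals near $a$ you compare with the fixed value $F_{+}(a+2|I|)$, and bound the local part in absolute value by $\log^{+}(|I|/|k-a|)+O(1)$, whose averages over $I$ are $O(1)$.

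This works without any cancellation or $L^{2}$-boundedness input because the density lives on a segment transversal to $\mathbb{R}$. The absolute kernel integrated over that segment therefore has only a logarithmic, locally integrable singularity at the single point $k=a$. In effect you reproduce the standard Calder\'on--Zygmund $L^{\infty}\to BMO$ argument, with its ``local part'' handled trivially.

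What each route buys: yours is fully self-contained and shows explicitly that the $BMO$ constant is independent of $a$. The paper's is shorter, but it leaves implicit what the cited theorem actually requires: that the kernel $(y-\mathrm{i}x)^{-1}$ (with $x=1-k$) is a standard Calder\'on--Zygmund kernel whose operator is $L^{2}$-bounded.

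One cleanup: delete the sentence in your opening paragraph claiming it suffices to have boundedness away from $\pm a$ plus a logarithmic singularity near $\pm a$. As you correctly point out later, a pointwise majorant does not imply $BMO$ membership, and your actual argument does not use that claim.
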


\begin{proof}
By the scaling it suffices to prove the estimate for $a=1$. Write
\[
F=F_{\mathrm{top}}+F_{+}+F_{-},
\]
where $F_{\mathrm{top}}$ is the contribution of the horizontal side $\left[
-1+\mathrm{i},1+\mathrm{i}\right]  $ and $F_{\pm}$ are the contributions of
the vertical sides $\left[  \pm1,\pm1+\mathrm{i}\right]  $. Since $\left[
-1+\mathrm{i},1+\mathrm{i}\right]  $ stays away from $\mathbb{R}$ one
immediately conclude $F_{\mathrm{top}}\in L^{\infty}\subset BMO$ and,
moreover, $\left\vert F_{\mathrm{top}}\left(  k\right)  \right\vert
\lesssim\left\Vert f\right\Vert _{\infty}/\left(  1+\left\vert k\right\vert
\right)  $ for $\left\vert k\right\vert \geq2$, so $F_{\mathrm{top}}\in L^{2}$
with
\[
\left\Vert F_{\mathrm{top}}\right\Vert _{BMO}+\left\Vert F_{\mathrm{top}%
}\right\Vert _{2}\lesssim\left\Vert f\right\Vert _{L^{\infty}\left(
S_{1}\right)  }.
\]

Next, for the right side,
\[
F_{+}\left(  k\right)  =-\int_{0}^{1}\frac{f\left(  1+\mathrm{i}y\right)
\,\mathrm{d}y}{y-\mathrm{i}\left(  1-k\right)  },
\]
which is the Cauchy transform of the density $g_{+}\left(  y\right)  =f\left(
1+\mathrm{i}y\right)  \mathrm{1}_{\left(  0,1\right)  }\left(  y\right)  $. By
the Calder\'{o}n--Zygmund theorem,
\[
\left\Vert F_{+}\right\Vert _{BMO}\lesssim\left\Vert g_{+}\right\Vert
_{\infty}\leq\left\Vert f\right\Vert _{L^{\infty}\left(  S_{1}\right)  }.
\]
Also,
\[
\left\vert F_{+}\left(  k\right)  \right\vert \leq\left\Vert f\right\Vert
_{L^{\infty}\left(  S_{1}\right)  }\int_{0}^{1}\frac{\mathrm{d}y}%
{\sqrt{\left(  1-k\right)  ^{2}+y^{2}}}=\left\Vert f\right\Vert _{L^{\infty
}\left(  S_{1}\right)  }\operatorname{arsinh}\frac{1}{\left\vert
1-k\right\vert }.
\]
The function $\operatorname{arsinh}\left(  1/\left\vert x\right\vert \right)
$ is square integrable on $\mathbb{R}$: near $x=0$ it behaves like
$\log\left(  2/\left\vert x\right\vert \right)  $, while for $\left\vert
x\right\vert \geq1$ it is $\lesssim1/\left\vert x\right\vert $. Hence
$F_{+}\in L^{2}$ and
\[
\left\Vert F_{+}\right\Vert _{2}\lesssim\left\Vert f\right\Vert _{L^{\infty
}\left(  S_{1}\right)  }.
\]

The left side $F_{-}\left(  k\right)  $ is treated in exactly the same way and
we finally conclude%
\[
\left\Vert F\right\Vert _{BMO}+\left\Vert F\right\Vert _{2}\lesssim\left\Vert
f\right\Vert _{L^{\infty}\left(  S_{1}\right)  }.
\]

\end{proof}

\begin{lemma}
\label{Lemma on unif conv}Let $q,\widetilde{q}\in L^{1}$ and $Q=\max\left(
\left\Vert q\right\Vert _{1},\left\Vert \widetilde{q}\right\Vert _{1}\right)
$. Then%
\[
\sup_{\left\vert k\right\vert \geq a}\left\vert k\left[  L\left(  k\right)
-\widetilde{L}\left(  k\right)  \right]  \right\vert \lesssim_{a,Q}\left\Vert
q-\widetilde{q}\right\Vert _{1}.
\]
for any $a~$subject to $\left(  Q/2a\right)  \exp\left(  Q/a\right)  <1.$
\end{lemma}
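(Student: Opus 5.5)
We express $L$ through Jost data and reduce the lemma to a Lipschitz estimate for Jost solutions, valid uniformly for real $|k|\ge a$. By (\ref{T}), (\ref{RL}),
\[
L(k)=\frac{W(\psi_{+},\overline{\psi_{-}})}{W(\psi_{-},\psi_{+})},
\qquad
kL(k)=\frac{k\,W(\psi_{+},\overline{\psi_{-}})}{W(\psi_{-},\psi_{+})}.
\]
Work with the left Jost solution in Volterra form,
\[
\psi_{-}(x,k)=\mathrm{e}^{-\mathrm{i}kx}+\int_{-\infty}^{x}\frac{\sin k(x-s)}{k}\,q(s)\psi_{-}(s,k)\,\mathrm{d}s .
\]
Both Wronskians can then be written as Fourier-type integrals. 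Evaluating them as $x\to+\infty$ gives
\[
W(\psi_{-},\psi_{+})=2\mathrm{i}k-\int q(s)\,\mathrm{e}^{\mathrm{i}ks}\psi_{-}(s,k)\,\mathrm{d}s,
\qquad
W(\psi_{+},\overline{\psi_{-}})=\int q(s)\,\mathrm{e}^{\mathrm{i}ks}\,\overline{\psi_{-}(s,k)}\,\mathrm{d}s ,
\]
up to signs and constants that I will fix during the computation. Dividing by $2\mathrm{i}k$ gives
\[
kL(k)=\frac{\tfrac{1}{2\mathrm{i}}\int q\,\mathrm{e}^{\mathrm{i}ks}\,\overline{\psi_{-}}\,\mathrm{d}s}{1-\tfrac{1}{2\mathrm{i}k}\int q\,\mathrm{e}^{\mathrm{i}ks}\psi_{-}\,\mathrm{d}s}.
\]

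\textbf{Step 1: uniform bounds on the Jost solution.} For real $|k|\ge a$, put $u=\mathrm{e}^{\mathrm{i}kx}\psi_{-}$. The kernel $|\sin k(x-s)/k|$ is at most $1/a$, so the Neumann series for $u$ converges and gives
\[
|u|\le \mathrm{e}^{Q/a},\qquad |u-1|\le \mathrm{e}^{Q/a}-1 .
\]
For the difference of the two Jost solutions, a Gronwall argument on the difference equation gives
\[
\|u-\widetilde u\|_{\infty}\le \frac{1}{a}\,\mathrm{e}^{2Q/a}\,\|q-\widetilde q\|_{1}.
\]

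\textbf{Step 2: the numerator.} The numerator is $\int q\,\mathrm{e}^{2\mathrm{i}ks}\,\overline{u}\,\mathrm{d}s$. Its difference for $q$ and $\widetilde q$ splits as
\[
\int (q-\widetilde q)\,\mathrm{e}^{2\mathrm{i}ks}\,\overline{u}\,\mathrm{d}s+\int \widetilde q\,\mathrm{e}^{2\mathrm{i}ks}\,(\overline{u}-\overline{\widetilde u})\,\mathrm{d}s,
\]
which is bounded by $\|q-\widetilde q\|_{1}$ times a constant depending only on $a$ and $Q$.

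\textbf{Step 3: the denominator (the main obstacle).} The difficulty is a lower bound on the denominator that is uniform over all such potentials. Write $D=1-\frac{1}{2\mathrm{i}k}\int q\,u\,\mathrm{d}s$, which equals $1/T$. Then
\[
|D|\ge 1-\frac{Q}{2a}\,\mathrm{e}^{Q/a}>0,
\]
and this lower bound is exactly the role of the hypothesis $(Q/2a)\exp(Q/a)<1$. The same bound applies to $\widetilde D$. The difference $D-\widetilde D$ is estimated exactly as in Step 2.

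\textbf{Step 4: conclusion.} Combine the bounds through
\[
\frac{N}{D}-\frac{\widetilde N}{\widetilde D}=\frac{N-\widetilde N}{D}+\widetilde N\,\frac{\widetilde D-D}{D\widetilde D},
\]
using the uniform bound $|\widetilde N|\le Q\,\mathrm{e}^{Q/a}/2$. This yields the claimed estimate, uniformly in $|k|\ge a$.

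Extending the estimate to the complex sides of $\Gamma$, as the theorem requires, would need Step 1 to be redone there. It is not part of this lemma as stated.
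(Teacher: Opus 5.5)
Your proposal is correct and is essentially the paper's proof. The paper likewise writes $L=bT$ with $b$ and $1/T$ given as integrals of $q$ against the normalized Jost solution (your $N/k$ and $D$), uses the Volterra bounds $\|u\|_\infty\le \mathrm{e}^{Q/|k|}$ and $\|u-\widetilde u\|_\infty\le \mathrm{e}^{2Q/|k|}\|q-\widetilde q\|_1/|k|$, gets $|1/T|\ge 1-\frac{Q}{2a}\mathrm{e}^{Q/a}$ from the hypothesis, and concludes with the same algebraic splitting $L-\widetilde L=(b-\widetilde b)\widetilde T-(1/T-1/\widetilde T)\,T\widetilde T\, b$; your closing caveat about the complex sides of $\Gamma$ is also apt, since the paper's argument is likewise carried out only for real $k$.
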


\begin{proof}
Recall that by (\ref{RL}) $L\left(  k\right)  =b\left(  k\right)  T\left(
k\right)  $ \ with%
\[
b\left(  k\right)  =\frac{1}{2\mathrm{i}k}\int_{\mathbb{R}}\mathrm{e}%
^{-2\mathrm{i}kx}q\left(  x\right)  m\left(  k,x\right)  \mathrm{d}%
x,\ \ \ \frac{1}{T\left(  k\right)  }=1-\frac{1}{2\mathrm{i}k}\int%
_{\mathbb{R}}q\left(  x\right)  m\left(  k,x\right)  \mathrm{d}x,
\]
where $m\left(  k,x\right)  =\mathrm{e}^{-\mathrm{i}kx}\psi(x,k).$ Moreover
\cite{Hryniv21} (see also \cite{RybBLMS2002}),%
\[
\left\Vert \widetilde{m}\left(  k,\cdot\right)  \right\Vert _{\infty
},\left\Vert m\left(  k,\cdot\right)  \right\Vert _{\infty}\leq\exp\left(
Q/\left\vert k\right\vert \right)
\]%
\[
\left\Vert m\left(  \cdot,k\right)  -\widetilde{m}\left(  \cdot,k\right)
\right\Vert _{\infty}\leq\exp\left\{  2Q/\left\vert k\right\vert \right\}
\left\Vert q-\widetilde{q}\right\Vert _{1}/\left\vert k\right\vert .
\]
It follows that%
\begin{align*}
&  2\left\vert k\right\vert \left\vert b\left(  k\right)  -\widetilde{b}%
\left(  k\right)  \right\vert \\
&  \leq\int_{\mathbb{R}}\left\vert q\left(  x\right)  -\widetilde{q}\left(
x\right)  \right\vert \left\vert m\left(  k,x\right)  \right\vert
\mathrm{d}x+\int_{\mathbb{R}}\left\vert \widetilde{q}\left(  x\right)
\right\vert \left\vert m\left(  k,x\right)  -\widetilde{m}\left(  k,x\right)
\right\vert \mathrm{d}x\\
&  \leq\left\{  \exp\left(  Q/\left\vert k\right\vert \right)  +\exp\left\{
2Q/\left\vert k\right\vert \right\}  \right\}  \left\Vert q-\widetilde{q}%
\right\Vert _{1}%
\end{align*}
and thus%
\[
\sup_{\left\vert k\right\vert \geq a}\left\vert k\left[  b\left(  k\right)
-\widetilde{b}\left(  k\right)  \right]  \right\vert \lesssim_{a,Q}\left\Vert
q-\widetilde{q}\right\Vert _{1}.
\]
Similarly%
\[
\sup_{\left\vert k\right\vert \geq a}\left\vert k\left[  1/T\left(  k\right)
-1/\widetilde{T}\left(  k\right)  \right]  \right\vert \lesssim_{a,Q}%
\left\Vert q-\widetilde{q}\right\Vert _{1}.
\]
Next,
\[
\frac{1}{\left\vert \widetilde{T}\left(  k\right)  \right\vert }\geq1-\frac
{1}{2\left\vert k\right\vert }\int_{\mathbb{R}}\left\vert \widetilde{q}\left(
x\right)  \right\vert \left\vert \widetilde{m}\left(  k,x\right)  \right\vert
\mathrm{d}x\geq1-\frac{Q}{2\left\vert k\right\vert }\exp\left(  Q/\left\vert
k\right\vert \right)  ,
\]
and hence if we choose $a>0$ so large that%
\[
\frac{Q}{2a}\exp\left(  Q/a\right)  \leq1-c<1
\]
we have $\left\vert \widetilde{T}\left(  k\right)  \right\vert \leq1/c$ for
$\left\vert k\right\vert \geq a$. Since%
\[
L-\widetilde{L}=\left(  b-\widetilde{b}\right)  \widetilde{T}-\left(
1/T-1/\widetilde{T}\right)  T\widetilde{T}b
\]
we arrive at the desirable conclusion.
\end{proof}

\begin{lemma}
\label{Lemma on norm} If $\varphi\in H^{\infty}+C$, $\left\vert \varphi
\right\vert <1$ a.e. and $\mathbb{H}\left(  \varphi\right)  $ is selfadjoint
then $\Vert\mathbb{H}(\varphi)\Vert<1$.
\end{lemma}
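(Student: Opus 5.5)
Since $\mathbb{H}(\varphi)$ is self-adjoint, its norm equals $\sup\{|\lambda|:\lambda\in\operatorname{Spec}\mathbb{H}(\varphi)\}$. By the Hartman theorem, $\varphi\in H^{\infty}+C$ makes $\mathbb{H}(\varphi)$ compact. Hence its spectrum consists of $0$ together with a sequence of real eigenvalues of finite multiplicity accumulating only at $0$, and $\Vert\mathbb{H}(\varphi)\Vert=|\lambda_{\ast}|$ for some eigenvalue $\lambda_{\ast}$ (unless $\mathbb{H}(\varphi)=0$, in which case there is nothing to prove). It therefore suffices to show that no eigenvalue $\lambda$ satisfies $|\lambda|\geq1$. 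Here compactness is used precisely to guarantee that the norm is attained on an eigenvector; without it, $\Vert\mathbb{H}(\varphi)\Vert=1$ could occur even with $|\varphi|<1$ a.e.

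So let $f\in H^{2}$, $f\neq0$, with $\mathbb{H}(\varphi)f=\lambda f$. We have the chain
\[
|\lambda|\,\Vert f\Vert_{2}=\Vert\mathbb{JP}_{-}\varphi f\Vert_{2}=\Vert\mathbb{P}_{-}\varphi f\Vert_{2}\leq\Vert\varphi f\Vert_{2}\leq\Vert f\Vert_{2},
\]
using that $\mathbb{J}$ is unitary on $L^{2}$, that $\mathbb{P}_{-}$ is an orthogonal projection, and that $|\varphi|\leq1$ a.e. Thus $|\lambda|\leq1$, and if $|\lambda|=1$ then every inequality is an equality. In particular $\int(1-|\varphi|^{2})|f|^{2}=0$. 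Since $1-|\varphi|^{2}>0$ a.e., this forces $f=0$ a.e. on $\mathbb{R}$. A function in $H^{2}$ is determined by its boundary values, so $f=0$, a contradiction. (One could even avoid the uniqueness theorem for $H^{2}$ here, since the boundary values determine the $H^{2}$ norm via (\ref{H^p norm}).) Hence every eigenvalue satisfies $|\lambda|<1$.

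Combining the two steps, the finitely many eigenvalues with $|\lambda|\geq\varepsilon$, for any fixed $\varepsilon>0$, all lie strictly inside the unit disk. The supremum is attained, so $\Vert\mathbb{H}(\varphi)\Vert=\max|\lambda|<1$.

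The main obstacle is conceptual rather than computational: justifying that the norm of $\mathbb{H}(\varphi)$ is realized by an eigenvalue. This is exactly where compactness from the Hartman theorem and self-adjointness are both used. Self-adjointness gives $\Vert T\Vert=$ spectral radius, and compactness gives that nonzero spectrum is point spectrum. The remaining equality-case argument is routine. The only care needed is to use the strict inequality $|\varphi|<1$ on a set of full measure, not merely $\Vert\varphi\Vert_{\infty}\leq1$.
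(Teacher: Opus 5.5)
Your proof is correct and follows essentially the same route as the paper: Hartman's theorem and self-adjointness mean $\Vert\mathbb{H}(\varphi)\Vert$ is attained at an eigenvalue, and a unimodular eigenvalue is then ruled out using $|\varphi|<1$ a.e. The only difference is cosmetic. You exclude $|\lambda|=1$ with the norm chain $\Vert\mathbb{P}_{-}\varphi f\Vert_{2}\leq\Vert\varphi f\Vert_{2}<\Vert f\Vert_{2}$, whereas the paper bounds the quadratic form $\langle\mathbb{H}(\varphi)f,f\rangle=\langle\varphi f,\mathbb{J}f\rangle$ by Cauchy--Schwarz. Your chain, combined with the fact that compact operators attain their norm, would even make self-adjointness unnecessary.
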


\begin{proof}
(By contradiction) Assume that $\Vert\mathbb{H}(\varphi)\Vert=1$. Since
$\mathbb{H}\left(  \varphi\right)  $ is selfadjoint and compact,
$\mathbb{H}(\varphi)$ must have a unimodular eigenvalue $\lambda$
($\lambda=\pm1$). For the associated normalized eigenfunction $f\in H^{2}$ we
have%
\[
\left\langle \mathbb{H}(\varphi)f,f\right\rangle =\left\langle \varphi
f,\mathbb{P}_{-}\mathbb{J}f\right\rangle =\left\langle \varphi f,\mathbb{J}%
f\right\rangle
\]
and hence by the Cauchy inequality%
\begin{align}
1  &  =\left\vert \lambda\right\vert =\left\vert \left\langle \mathbb{H}%
(\varphi)f,f\right\rangle \right\vert \leq\int_{\mathbb{R}}\left\vert
\varphi\right\vert \left\vert f\right\vert \left\vert \mathbb{J}f\right\vert
\leq\left(  \int_{\mathbb{R}}\left\vert \varphi\right\vert \left\vert
f\right\vert ^{2}\right)  ^{1/2}\ \Vert f\Vert_{2}\nonumber\label{eig}\\
&  =\left(  \int_{\mathbb{R}}\left\vert \varphi\right\vert \left\vert
f\right\vert ^{2}\right)  ^{1/2}<\Vert f\Vert_{2}=1\text{ (since }\left\vert
\varphi\right\vert <1\text{ a.e.).}\nonumber
\end{align}
Thus $\left\vert \lambda\right\vert <1$ which is a contradiction.
\end{proof}

\begin{lemma}
\label{Lemma on unif converrgence} Let $T_{n}\rightarrow T,n\rightarrow\infty
$, in norm and $\left(  I+T_{n}\right)  ^{-1},\left(  I+T\right)  ^{-1}$ are
all bounded, then $\left(  I+T_{n}\right)  ^{-1}\rightarrow\left(  I+T\right)
^{-1}$, $n\rightarrow\infty$, in norm.
\end{lemma}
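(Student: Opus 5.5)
The plan is to reduce the statement to the standard second resolvent identity in a Banach algebra. Set $A_n=I+T_n$ and $A=I+T$. By hypothesis both $A_n^{-1}$ and $A^{-1}$ are bounded operators, so
\[
A_n^{-1}-A^{-1}=A_n^{-1}\left(A-A_n\right)A^{-1}=A_n^{-1}\left(T-T_n\right)A^{-1},
\]
and therefore
\[
\left\Vert A_n^{-1}-A^{-1}\right\Vert \leq\left\Vert A_n^{-1}\right\Vert \left\Vert T-T_n\right\Vert \left\Vert A^{-1}\right\Vert .
\]
Since $\left\Vert T-T_n\right\Vert \rightarrow0$, it remains to show that the norms $\left\Vert A_n^{-1}\right\Vert$ are bounded uniformly in $n$, at least for all sufficiently large $n$.

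The uniform bound is the only real point, and I will obtain it from a Neumann series around $A$ instead of relying on the individual bounds for each $n$. Write
\[
A_n=A+\left(T_n-T\right)=A\left[I+A^{-1}\left(T_n-T\right)\right].
\]
Choose $n_0$ so that $\left\Vert A^{-1}\right\Vert \left\Vert T_n-T\right\Vert \leq1/2$ for $n\geq n_0$. For such $n$ the bracket is invertible by the Neumann series, and its inverse has norm at most $2$. The right-hand side is then a bounded inverse of $A_n$, and by uniqueness of inverses it coincides with $A_n^{-1}$. This gives $\left\Vert A_n^{-1}\right\Vert \leq2\left\Vert A^{-1}\right\Vert$ for $n\geq n_0$. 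Substituting into the estimate above yields
\[
\left\Vert A_n^{-1}-A^{-1}\right\Vert \leq2\left\Vert A^{-1}\right\Vert ^{2}\left\Vert T-T_n\right\Vert \rightarrow0 .
\]
Finitely many indices do not affect the limit.

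The argument therefore needs only the boundedness of $(I+T)^{-1}$ and the norm convergence $T_n\rightarrow T$. The assumed boundedness of each $(I+T_n)^{-1}$ is used only to identify the Neumann-series inverse, and is in fact automatic for large $n$. The one step that requires care is avoiding the incorrect use of $n$-dependent bounds such as $\alpha_b$ from the main proof as if they were uniform. The perturbation argument above sidesteps this entirely, so I do not expect a genuine obstacle.
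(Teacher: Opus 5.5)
Your proof is correct and complete. The paper states this lemma without any proof and treats it as the standard fact that inversion is norm-continuous on the invertible operators. What you give is exactly the standard proof: the second resolvent identity, plus the Neumann-series bound $\|(I+T_n)^{-1}\|\le 2\|(I+T)^{-1}\|$ once $\|(I+T)^{-1}\|\,\|T_n-T\|\le 1/2$. It yields the explicit rate $\|(I+T_n)^{-1}-(I+T)^{-1}\|\le 2\|(I+T)^{-1}\|^2\|T_n-T\|$.

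Your observation that only the invertibility of $I+T$ is really needed is the substantive point. In the main proof, the lower bounds $\alpha_b$ for $I+\mathbb{H}(\Phi_{b,x,t})$ are obtained separately for each $b$, and nothing is said about their uniformity in $b$. So the lemma is applied there with only individual boundedness, and your perturbation argument shows that this version suffices.
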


\begin{lemma}
\label{Lemma on L}If $q\in L^{2}$ and $q_{b}$ is the restriction of $q$ to
$\left(  0,b\right)  $. Then $kL_{b}\left(  k\right)  $ converges to
$kL\left(  k\right)  $ in $L^{2}$.
\end{lemma}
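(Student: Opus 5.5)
The plan is to split the real line into three frequency regions, $|k|<\varepsilon$, $\varepsilon\le|k|\le A$, and $|k|>A$. Each region is handled by a different mechanism, and throughout I use that $q\in L^{1}\cap L^{2}$ as in Theorem \ref{main thm}, so that $\left\Vert q-q_{b}\right\Vert_{1}\rightarrow0$ and $\left\Vert q-q_{b}\right\Vert_{2}\rightarrow0$. Set $Q=\left\Vert q\right\Vert _{1}\geq\left\Vert q_{b}\right\Vert _{1}$.

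\emph{Low frequencies.} Since $|L|,|L_{b}|\leq1$ on $\mathbb{R}$, we get
\[
\int_{|k|<\varepsilon}|k|^{2}|L-L_{b}|^{2}\mathrm{d}k\leq8\varepsilon^{3}/3 ,
\]
which is small uniformly in $b$. No information about the zero-energy singularity is needed here.

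\emph{Intermediate frequencies.} I would rerun the proof of Lemma \ref{Lemma on unif conv} on $\varepsilon\leq|k|\leq A$. It uses the representations $L=bT$ and $1/T=1-(2\mathrm{i}k)^{-1}\int qm$, together with the bounds
\[
\left\Vert m(k,\cdot)\right\Vert _{\infty}\leq\mathrm{e}^{Q/|k|},\qquad \left\Vert m-\widetilde{m}\right\Vert _{\infty}\leq\mathrm{e}^{2Q/|k|}\left\Vert q-\widetilde{q}\right\Vert _{1}/|k| .
\]
Both bounds hold for every real $k\neq0$. The only place that lemma needed $|k|\geq a$ with a smallness condition was to bound $|T|$. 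For real $k$, however, unitarity of the scattering matrix gives $|T|\leq1$ and $|L|\leq1$, so that step is unnecessary.

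\emph{Remark on the algebra.} In the identity $L-\widetilde{L}=(b-\widetilde{b})\widetilde{T}-(1/T-1/\widetilde{T})T\widetilde{T}b$, the product $Tb=L$ is bounded by $1$. Hence $\sup_{\varepsilon\leq|k|\leq A}|k(L-L_{b})|\lesssim_{\varepsilon,Q}\left\Vert q-q_{b}\right\Vert _{1}\rightarrow0$, and so
\[
\int_{\varepsilon\leq|k|\leq A}|k|^{2}|L-L_{b}|^{2}\rightarrow0 .
\]

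\emph{High frequencies.} This is the main obstacle, since here I need a bound uniform in $b$. I would use the Born approximation. Write
\[
2\mathrm{i}kL(k)=T(k)\int\mathrm{e}^{-2\mathrm{i}kx}q(x)m(k,x)\mathrm{d}x=\widehat{q}(2k)+\rho(k),
\]
where $\widehat{q}(2k)=\int\mathrm{e}^{-2\mathrm{i}kx}q$, and similarly $2\mathrm{i}kL_{b}=\widehat{q_{b}}(2k)+\rho_{b}(k)$. To bound $\rho$, use the Volterra equation for $m$, which gives $|m-1|\leq\mathrm{e}^{Q/|k|}-1\lesssim_{Q}1/|k|$ for $|k|\geq1$. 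The formula for $1/T$ gives $|1-1/T|\lesssim_{Q}1/|k|$, and since $|T|\leq1$ also $|T-1|\lesssim_{Q}1/|k|$. Therefore
\[
|\rho(k)|\leq|T|\int|q||m-1|+|T-1|\,|\widehat{q}(2k)|\lesssim_{Q}1/|k|,
\]
and the same bound holds for $\rho_{b}$ with the same constant. This gives
\[
\int_{|k|>A}|\rho-\rho_{b}|^{2}\lesssim_{Q}1/A,
\]
which is small uniformly in $b$. The Fourier part is handled by Plancherel: $\left\Vert \widehat{q}(2\cdot)-\widehat{q_{b}}(2\cdot)\right\Vert _{2}\lesssim\left\Vert q-q_{b}\right\Vert _{2}\rightarrow0$.

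\emph{Conclusion.} Choose $\varepsilon$ small and $A$ large first, so that the low- and high-frequency contributions are below a prescribed $\delta$ for every $b$. Then let $b\rightarrow\infty$ to kill the intermediate region and the Plancherel term. This gives $\limsup_{b\rightarrow\infty}\left\Vert k(L-L_{b})\right\Vert _{2}\lesssim\delta$ for every $\delta>0$, which is the claim. The point to check carefully is that the bound $|m-1|\lesssim_{Q}1/|k|$ and the bound on $|T-1|$ depend on $q$ only through $Q$, so that they apply uniformly to all truncations $q_{b}$.
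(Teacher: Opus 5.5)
Your argument is correct for $q\in L^{1}\cap L^{2}$, and it takes a genuinely different route from the paper. Note, however, that it proves the lemma only under the extra hypothesis $q\in L^{1}$, which you import from Theorem \ref{main thm}.

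The paper never splits frequencies. It uses the layer-stripping identity $L=L_{b}+T_{b}^{2}L_{>b}/(1-R_{b}L_{>b})$, where $L_{>b}$ is the left reflection coefficient of the tail $q\mathrm{1}_{(b,\infty)}$. Combined with $|T_{b}|^{2}+|R_{b}|^{2}=1$ and $|L_{>b}|\leq1$, this gives the pointwise bound $|L-L_{b}|\leq2|L_{>b}|$ on the whole real line, so small $k$ needs no separate treatment. The second Zakharov--Faddeev sum rule applied to the tail then yields $\int k^{2}|L_{>b}|^{2}\mathrm{d}k\lesssim\int_{b}^{\infty}q^{2}$. That route is short, gives the explicit rate $\Vert k(L-L_{b})\Vert_{2}^{2}\lesssim\int_{b}^{\infty}q^{2}$, and needs only $q\in L^{2}$ supported on $(0,\infty)$, exactly as stated. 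Its cost is reliance on a deep sum rule for $L^{2}$ half-line potentials.

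Your route is elementary, and each step checks out:
Gronwall bounds for $m$;
unitarity in place of the smallness condition of Lemma \ref{Lemma on unif conv};
a Born approximation whose $O_{Q}(1/|k|)$ remainder is uniform in $b$;
Plancherel.

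However, two parts collapse for $q\in L^{2}\setminus L^{1}$. The intermediate region needs $\Vert q-q_{b}\Vert_{1}\rightarrow0$, and the remainder constant depends on $Q=\Vert q\Vert_{1}$. In that case the Jost-solution representation of $L$ and $T$ you rely on is no longer available. Since the lemma is only invoked in the proof of Theorem \ref{main thm}, where $q\in L^{1}\cap L^{2}$, your version suffices there. It does not, however, establish the lemma in the generality stated.
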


\begin{proof}
Our proof is based upon the following representation \cite{RybNON2011}%
\begin{equation}
L=L_{b}+\frac{T_{b}^{2}L_{>b}}{1-R_{b}L_{>b}},\label{split for L}%
\end{equation}
where all quantities indexed with $b$ correspond to those for $q_{b}$ and
$L_{>b}$ is the left reflection coefficient corresponding to $q-q_{b}$. Note
that the decomposition (\ref{split for L}) goes under different names:\ layer
stripping \cite{SylWIn1999}, potential fragmentation \cite{AK01} etc.
Actually, it is totally trivial and its derivation is purely algebraic and
follows from (\ref{T}),(\ref{RL}), and (\ref{L}) and Wronskian identities. It
follows from (\ref{split for L}) that%
\begin{equation}
\left\vert L-L_{b}\right\vert =\frac{\left\vert T_{b}\right\vert
^{2}\left\vert L_{>b}\right\vert }{\left\vert 1-R_{b}L_{>b}\right\vert }%
\leq\frac{\left(  1-\left\vert R_{b}\right\vert ^{2}\right)  \left\vert
L_{>b}\right\vert }{1-\left\vert R_{b}\right\vert \left\vert L_{>b}\right\vert
}\leq2\left\vert L_{>b}\right\vert ,\label{crutial ineq}%
\end{equation}
where we have used%
\[
\left\vert T_{b}\right\vert ^{2}+\left\vert R_{b}\right\vert ^{2}=1,\left\vert
R_{b}\right\vert \leq1,\left\vert L_{>b}\right\vert \leq1.
\]
We are done if we show that $k\left\vert L_{>b}\left(  k\right)  \right\vert $
converges to zero as $b\rightarrow+\infty$. This immediately follows from the
second Zakharov-Faddeev trace formulas \cite{Zakharov71} in its final form
\cite{KillipSimon2008} that reads%
\begin{equation}
\frac{16}{3}%
{\displaystyle\sum}
\kappa_{j}^{3}+\frac{8}{\pi}\int_{\mathbb{R}}k^{2}\log\left(  1-\left\vert
L\left(  k\right)  \right\vert ^{2}\right)  ^{-1}\mathrm{d}k=\int_{\mathbb{R}%
}q^{2},\label{ZF trace 2}%
\end{equation}
for any $q\in L^{2}$ supported on $\left(  0,\infty\right)  $. Applying
(\ref{ZF trace 2}) to $q\mathrm{1}_{\left(  b,\infty\right)  }$ yields%
\[
\int_{\mathbb{R}}k^{2}\log\left(  1-\left\vert L_{>b}\left(  k\right)
\right\vert ^{2}\right)  ^{-1}\mathrm{d}k\leq\frac{\pi}{8}\int_{b}^{\infty
}q^{2}.
\]
But since $\left\vert L_{>b}\left(  k\right)  \right\vert \leq1$ one has%
\[
\log\left(  1-\left\vert L_{>b}\left(  k\right)  \right\vert ^{2}\right)
^{-1}\geq\left\vert L_{>b}\left(  k\right)  \right\vert ^{2},
\]
and we conclude that%
\[
\int_{\mathbb{R}}k^{2}\left\vert L_{>b}\left(  k\right)  \right\vert
^{2}\mathrm{d}k\leq\frac{\pi}{8}\int_{b}^{\infty}q^{2}\rightarrow
0,b\rightarrow+\infty.
\]

\end{proof}

\section{Acknowledgment}

We are thankful to Don Marshal for his drawing our attention to
\cite{GuillorySarason81}, which has played a crucial role in our approach.

\end{document}